\documentclass[journal]{IEEEtran}
\ifCLASSINFOpdf
\else
\fi
\hyphenation{op-tical net-works semi-conduc-tor}
\usepackage{amsthm}
\usepackage{amssymb}
\usepackage{amsmath}
\usepackage{graphicx}
\usepackage{subfigure}
\usepackage{caption}
\usepackage{pmat}
\usepackage{float}
\usepackage{lipsum}
\usepackage{enumitem}
\usepackage{epstopdf}
\usepackage{multirow}
\usepackage{tabularx}
\usepackage{lipsum}
\usepackage{multicol}
\usepackage{amsfonts}
\usepackage[english]{babel}
\usepackage[autostyle]{csquotes}
\usepackage{array}
\usepackage{dcolumn}
\theoremstyle{remark}
\usepackage{booktabs}

\usepackage{arydshln}

\usepackage{algorithm}
\usepackage[end]{algpseudocode}
\usepackage{setspace}
\usepackage{nicefrac}     
\usepackage{microtype}
\usepackage[bookmarks=false]{hyperref}
\usepackage{hyperref}
\usepackage{xcolor}
\usepackage{mathtools, cuted}
\usepackage{flushend}
\usepackage{xfrac}
\pdfminorversion=4
\begin{document}
	 \thispagestyle{empty} 
	\onecolumn
	\begin{center}
This paper has been accepted for publication in \textit{IEEE TRANSACTIONS ON   SYSTEM, MAN, AND CYBERNETICS: SYSTEMS}\\\vspace{1cm}
DOI: {\color{blue}10.1109/TSMC.2020.3012507}\\
IEEE Xplore: {\color{blue}\url{https://ieeexplore.ieee.org/document/9166758}}
\end{center}
\vspace{1cm}
@2020 IEEE. Personal use of this material is permitted. Permission from IEEE must be obtained for all other uses, in any
current or future media, including reprinting /republishing this material for advertising or promotional purposes, creating new
collective works, for resale or redistribution to servers or lists, or reuse of any copyrighted component of this work in other works.

\newpage
\title{Robust   Simultaneously  Stabilizing  Decoupling Output Feedback Controllers for   Unstable Adversely Coupled   Nano Air Vehicles}

\author{Jinraj~V.~Pushpangathan,~
        Harikumar~Kandath,~\IEEEmembership{Member,~IEEE,},
          Suresh~Sundaram,~\IEEEmembership{Senior Member,~IEEE,},
        and~Narasimhan~Sundararajan,~\IEEEmembership{Life Fellow,~IEEE}
\thanks{Research fellow @ Department
of Aerospace Engineering, Indian Institute of Science, Bangalore-560012,
India, e-mail: (jinrajaero@gmail.com).}
\thanks{Assistant Professor @ Robotics Research Center, International Institute of Information Technology, Hyderabad-32, India, e-mail:(harikumar.k@iiit.ac.in)}
\thanks{Associate professor @ Department
of Aerospace Engineering, Indian Institute of Science, Bangalore- 560012, India, e-mail: (vssuresh@iisc.ac.in).}
\thanks{Professor (Retd.) @ School of Electrical and Electronics Engineering, Nanyang Technological University, Singapore, e-mail: (ensundara@ntu.edu.sg).}
}
\markboth{IEEE TRANSACTIONS ON SYSTEM, MAN, AND CYBERNETICS: SYSTEMS,~Vol.~XX, No.~XX, XXXX~2020}%
{Shell \MakeLowercase{\textit{et al.}}: Bare Demo of IEEEtran.cls for IEEE Journals}
\twocolumn

\maketitle
\setcounter{page}{1}

\begin{abstract}
The plants of nano air vehicles (NAVs) are generally unstable, adversely coupled, and uncertain. Besides, the autopilot hardware of a NAV has limited sensing and computational capabilities. Hence, these vehicles need a single controller referred to as Robust Simultaneously Stabilizing Decoupling (RSSD)  output feedback controller that achieves simultaneous stabilization,  desired decoupling, robustness, and performance for a finite set of unstable multi-input-multi-output adversely coupled uncertain plants. To synthesize a RSSD  output feedback controller, 
 a new method that is based on a central plant  is proposed in this paper. Given a finite set of plants for simultaneous stabilization, we considered a plant  in this set that has the smallest maximum $v-$gap metric as the central plant. 
Following this,  the sufficient condition for the existence of a simultaneous stabilizing controller associated with such a plant  is described.   The decoupling feature is then appended to this controller using the properties of the eigenstructure assignment method.
 Afterward, the sufficient conditions for the existence of a RSSD  output feedback controller are obtained. Using these sufficient conditions, a new optimization problem for the synthesis of a RSSD  output feedback controller is formulated.  To solve this optimization problem, a new genetic algorithm based offline iterative algorithm is developed.  The effectiveness of this iterative algorithm is then demonstrated by generating a RSSD controller for a fixed-wing nano air vehicle. The performance of this controller is validated through numerical and hardware-in-the-loop simulations.

\end{abstract}
\begin{IEEEkeywords}
Decoupling,  central plant, nano air vehicle, output feedback, robust  simultaneous stabilization, $v$-gap metric
\end{IEEEkeywords}
\IEEEpeerreviewmaketitle
\section{Introduction}

Nano air vehicles  are extremely small air vehicles that are widely used for intelligence, battlefield surveillance and reconnaissance, and disaster assessment missions.  NAVs have severe dimensional and weight constraints as their overall dimension and weight need to be lower than 75~mm and 20~g, respectively \cite{petericca}. Based on flight modes, they are classified as fixed-wing, rotary-wing, and flapping-wing NAVs. Figure \ref{fig:foo}  shows   a typical fixed-wing NAV that   weighs 19.4~g and has an overall dimension of 75~mm. Generally, a fixed-wing NAV requires flight controllers to accomplish a mission. Note that in this paper,  \textit{plant}  represents \textit{plant model}.
In most cases, one designs these flight controllers based on  linear time-invariant (LTI) plants that are obtained by linearizing the nonlinear plants of a NAV about   the nominal trajectory associated with different steady-state flight conditions. Thereafter, these controllers are scheduled (gain scheduling) based on the  altitude and Mach number (scheduling variables) to control the nonlinear plant. For example, in \cite{kami}, a gain scheduled controller for trajectory tracking is synthesized for a fixed-wing unmanned aerial vehicle (UAV). Because of the weight and dimensional constraints, the  autopilot hardware of a NAV has severe resource constraints as it lacks lightweight sensors to measure  state variables like an angle-of-attack, airspeed, and sideslip angle. 
Also, NAV's autopilot hardware does not possess sufficient computational and memory powers.  Figure \ref{fig:autopilot} shows a typical autopilot hardware of the 75~mm wingspan fixed-wing NAV \cite{jin}. This autopilot weighs 1.8~g and  has no sensors to measure airspeed, angle-of-attack, and  sideslip angle.
\begin{figure}[H]
\centering
\subfigure[Fixed-wing NAV \label{fig:foo}]{\includegraphics[width=1.5in, height=0.9in]{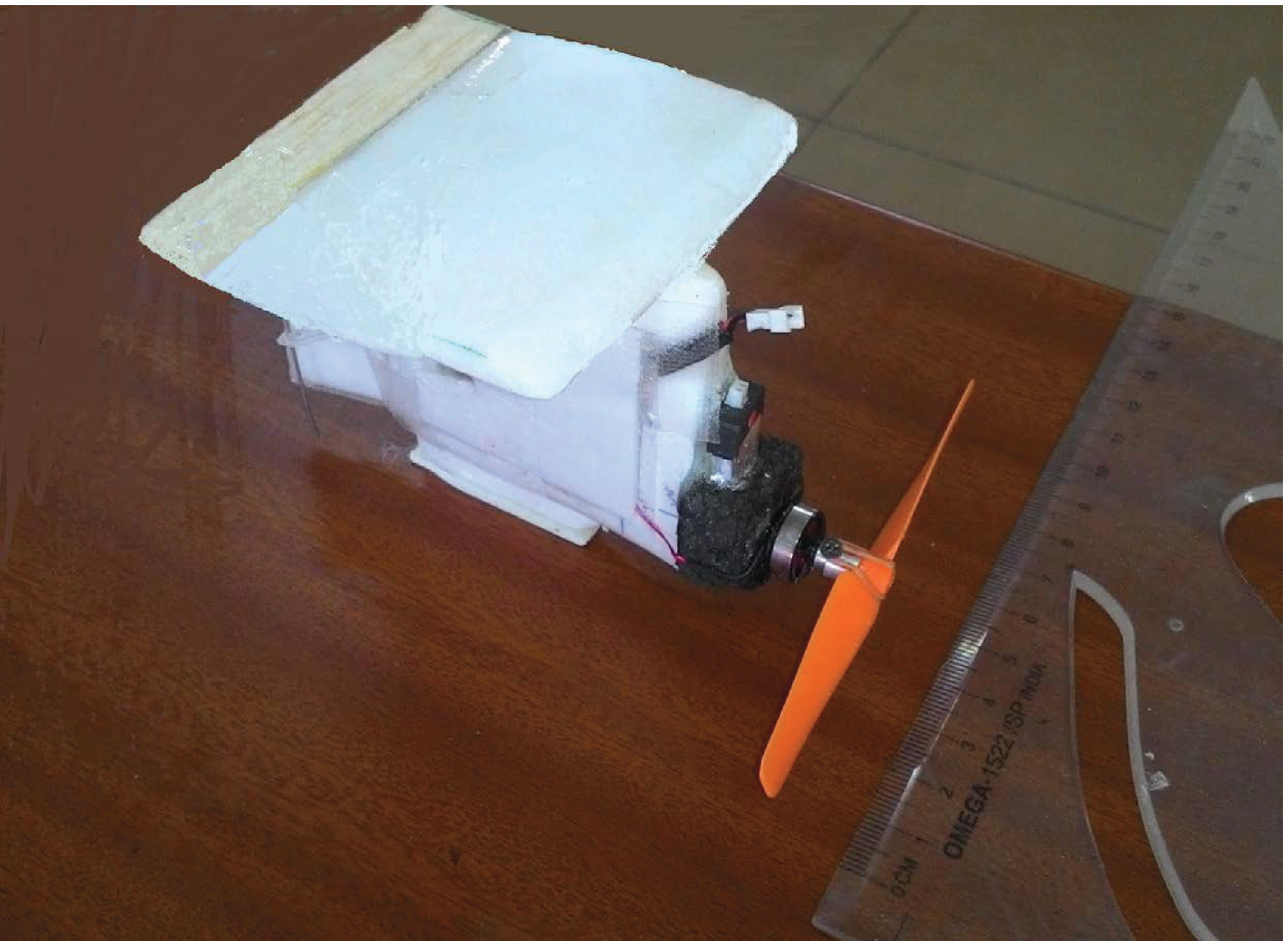}}
\subfigure[Autopilot hardware  \label{fig:autopilot}]{\includegraphics[width=1.5in, height=0.9in]{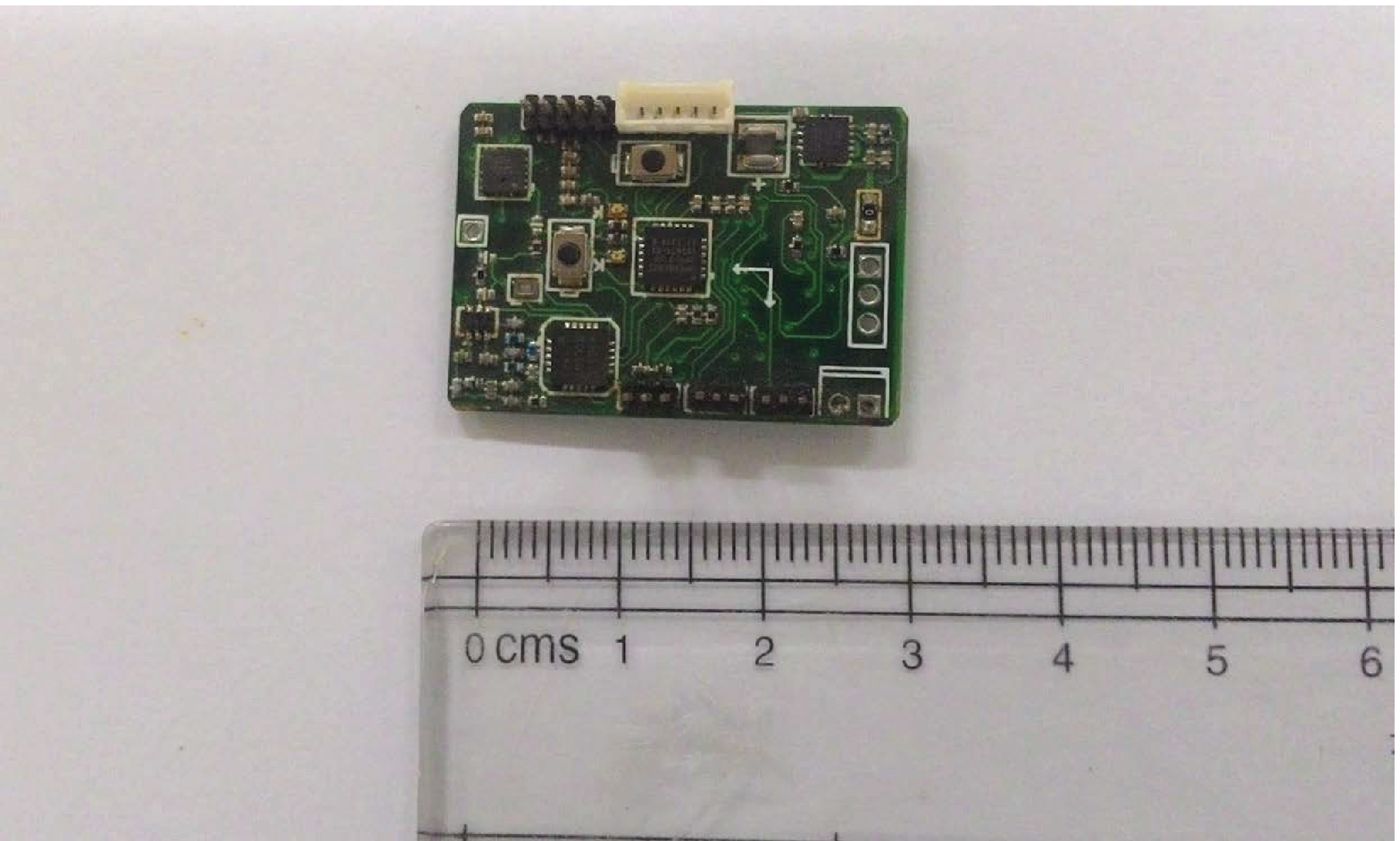}}
 \caption{75~mm wingspan NAV and  autopilot hardware}
\end{figure} 
\noindent Besides, this autopilot hardware has only 1~MB of memory capacity. Due to the insufficient computational and memory powers, it is hard to implement  the widely used estimation algorithm, namely, the Extended Kalman Filter  (EKF)     on  NAV's autopilot hardware. As airspeed cannot be measured or estimated, the Mach number will not be available for the scheduling of  controllers. Hence,  gain scheduling becomes inappropriate for the NAV. Generally, the controller for an aircraft is designed using  full state feedback.  For instance, in \cite{avati}-\cite{avati1} and  \cite{wei},  a full state feedback  controller is designed for a quadcopter and a micro aerial vehicle, respectively. However, the flight controller design  using full state feedback becomes ineffective for the NAV as all the state variables cannot be measured or estimated. The multi-input-multi-output (MIMO)  LTI plants of a fixed-wing NAV are generally unstable because of significant adverse coupling (mainly due to the notable gyroscopic coupling and counter torque of the propulsion unit), dimensional constraints on the stabilizers, aerodynamic effects,  and due to the limitation in keeping center-of-gravity (CG) at the desired location  \cite{jin}, \cite{harip}. 
The significant adverse coupling also makes the successive loop closure autopilot design technique ill-suited for a fixed-wing NAV. Further, the autopilot design using the distributed control system approach  \cite{lan} becomes unsuitable due to adverse coupling and unavailability of scheduling variables. Besides this, the presence of coupling in the dynamics  makes various existing well-proven waypoint following algorithms unsuitable for a fixed-wing NAV as   these algorithms are developed by considering a decoupled longitudinal and lateral dynamic plants. Hence, for using these  algorithms, the flight controller needs to have a mode decoupling capability. Furthermore, the plants of a fixed-wing NAV  have significant  uncertainties mainly due to the unsteady flow, wind effect, unmodeled dynamics, and inaccurate measurements of forces and moments. The  considerable uncertainties in  the plant  cause notable errors in the estimation of  state variables. Also, the unmodeled dynamic uncertainty makes a linear quadratic regulator (LQR) design unsuitable for a fixed-wing NAV.  Besides gain scheduling, the controller design techniques employed to design the flight controller for a nonlinear plant  of small aircraft are feedback linearization \cite{wenya},  backstepping with sliding mode control \cite{espi}, and model reference adaptive control \cite{zach}. However,  it is difficult to implement these controllers in a NAV due to the absence of high-bandwidth actuators and lightweight autopilot hardware with sufficient computational power and suitable sensors. In general,  MIMO LTI plants of a fixed-wing NAV are unstable, adversely coupled, and uncertain    \cite{jinthesis}. 
These general characteristics of the LTI plants and the resource constraints of autopilot hardware     suggest    that a NAV requires  a single computationally simple multivariable    output feedback  controller that provides the simultaneous   stabilization, desired decoupling, robustness, and performances to all the plants. \par

The determination of a single controller that stabilizes a finite number of plants is referred to as the  simultaneous stabilization (SS) problem. A simultaneously stabilizing  controller is suitable for a NAV as it needs a single controller that stabilizes a finite number of  LTI plants. The SS problem was first studied in  \cite{Saek} and \cite{vidya}. There is  no tractable solution to the SS problem of more than two plants  due to its NP-hard nature \cite{Gever}-\cite{onur}. Hence, for these cases, the SS problem is solved through numerical means. In \cite{CAO1},  an iterative algorithm based on linear matrix inequalities is developed to solve the SS problem of a finite set of strictly proper MIMO plants using static full state feedback and output feedback. A decomposition strategy  to solve the SS problem was proposed in  \cite{perez}. Here,  a bi-level design optimization structure is selected in which the design of the single controllers for each  plant  is carried out at the bottom level, whereas at the top-level optimization generates a single controller which can approximate all those individual controllers.
 Further solutions to the SS problem employing optimization techniques can be found in \cite{wu}-\cite{pas}.
The SS problem can also be solved by first deriving a sufficiency condition associated  with a central plant  and then synthesizing a controller that  satisfies this sufficiency condition using robust stabilization techniques. 
This method is followed in \cite{saif}. Here, the central plant is obtained by solving a 2-block optimization problem. 
 In \cite{jinjgcd},   a central plant  which is the best \textit{approximation} (\textit{closest} plant) to all the plants that require SS is identified. Then, the sufficient condition for the existence of a simultaneous stabilizing controller associated with this central plant  is obtained using the robust stabilization condition of a coprime factorized plant  and $v-$gap metric.  Following this, a stabilizing controller that satisfies this condition is synthesized.  It is to be noted that the aforementioned methods do not generate a single controller that concurrently achieves simultaneous stabilization, robustness, performance, and decoupling. \par

The problem of finding a single  output feedback controller that provides simultaneous stabilization, desired decoupling, robustness (against parametric and unmodeled dynamics uncertainties), and performances to a finite set of plants is considered as a Robust Simultaneous Stabilization Decoupling  (RSSD) problem. In this paper,   a RSSD problem is first formulated for a finite set of  MIMO LTI unstable adversely coupled uncertain plants of a fixed-wing NAV and then the same is solved in two steps. In the first step, the sufficient conditions for the existence of a robust simultaneous stabilizing output feedback controller associated with the central plant  are obtained using robust stabilization theory and the properties of the $v$-gap metric.
Next, an  output feedback decoupling controller that solves these conditions is synthesized using the eigenstructure assignment technique. 
The main contributions of this paper are 
\begin{enumerate}
\item As per the author's knowledge, this is the first time an output feedback controller design method that provides simultaneous stabilization, desired decoupling, robust stability, and performances is proposed.  Besides, the proposed method is generic as it can be applied to a finite set of MIMO plants, minimum/non-minimum phase plants, tall or fat plants, stable or unstable plants.

\item The sufficient conditions for the existence of a RSSD  output feedback controller are developed using the robust stabilization condition of the right coprime factorized plant,  the properties of $v$-gap metric,    and the eigenstructure assignment algorithm for output feedback.

\item Utilizing the sufficient conditions, a new  genetic algorithm based offline iterative algorithm called   NN-RSSD (non-convex-non-smooth-RSSD) is developed to solve a RSSD problem.

\item The effectiveness of NN-RSSD algorithm is demonstrated by generating a RSSD  output feedback controller for eight MIMO LTI unstable adversely coupled uncertain plants of the fixed-wing  NAV mentioned in \cite{jin}. The performance of this designed controller is then demonstrated first using   six-degree-of-freedom simulations. The implementation aspects, especially of the autopilot implementation, have been demonstrated using hardware-in-the-loop simulations of the closed-loop (CL) nonlinear plants of the NAV. \end{enumerate}

The  paper is organized as follows. 
In Section \ref{PRE}, preliminaries are given.
 The synthesizing of a RSSD  output feedback controller for a fixed-wing NAV  is explained in Section \ref{NRSSD}. 
 In Section \ref{DEex}, the design and performance evaluation of a RSSD  output feedback controller for the fixed-wing NAV is presented. Finally, in Section \ref{concu}, the conclusions  are  given.
\section{ PRELIMINARIES}\label{PRE} 
\newtheorem{defn}{Definition}[section]
\begin{defn}
\textit{Generalized stability margin ($b_{\mathbf{P},\mathbf{K}}$ $\in$ $[0, 1]$):}~~{\normalfont Given a plant, $\mathbf{P}(s)$   and its controller, $\mathbf{K}$. If the CL plant , $[\mathbf{P}(s), \mathbf{K}]$ is internally stable, then \cite{steele}
\begin{equation}
b_{\mathbf{P},\mathbf{K}}=\sfrac{1}
{\bigg|\bigg|\left[ \begin{array}{c} {\mathbf{P}(s)}  \\{\mathbf{I}}  \end{array} \right ]\left(\begin{array}{c} {\mathbf{I-KP}(s)}  \end{array} \right )^{-1}\left[ \begin{array}{cc} {\mathbf{-I}} & {\mathbf{K}}  \end{array} \right ]\bigg|\bigg|_\infty}
\label{eq:RCF1_1_1}
\end{equation}
otherwise, $b_{\mathbf{P},\mathbf{K}}$=0.
}

\end{defn}
\begin{defn}
\textit{$v-$gap metric:}~~{\normalfont Given two plants, $\mathbf{P}_1(s)$ and $\mathbf{P}_2(s)$. Then, $v-$gap metric, $\delta_v(\mathbf{P}_1(j\omega),\mathbf{P}_2(j\omega)) \in [0,1] $ of  $\mathbf{P}_1(s)$ and $\mathbf{P}_2(s)$ is defined as \cite{82}
\begin{align}
\intertext{if det(\textbf{I}+$\mathbf{P}_1(j\omega)\mathbf{P}_2(j\omega))~ \neq 0 ~\forall ~\omega~$  and
wno det(\textbf{I}+$\mathbf{P}_2(j\omega)\mathbf{P}_1(j\omega))+\eta(\mathbf{P}_1(j\omega))-\eta(\mathbf{P}_2(j\omega))$ $-\eta_0(\mathbf{P}_2(j\omega))$=0, then}
\delta_v(\mathbf{P}_1(j\omega),\mathbf{P}_2(j\omega))&=
\parallel \Psi(\mathbf{P}_1(j\omega),\mathbf{P}_2(j\omega)) \parallel_\infty \nonumber\\
\intertext{otherwise }  
 \delta_v(\mathbf{P}_1(j\omega),\mathbf{P}_2(j\omega))&=1   
\label{vgmetrc}
\end{align}
where $ \Psi(\mathbf{P}_1(j\omega),\mathbf{P}_2(j\omega))$ is $((\mathbf{I}+\mathbf{P}_2(j\omega)\mathbf{P}^*_2(j\omega))^{-0.5}$ $(\mathbf{P}_1(j\omega)-\mathbf{P}_2(j\omega))(\mathbf{I}+\mathbf{P}_1(j\omega)\mathbf{P}^*_1(j\omega))^{-0.5})$.
} 
\end{defn}
\begin{defn}
\textit{Central plant:}~~{\normalfont Given a finite set,  $\mathcal{P}$. Then, the central plant  belongs to $\mathcal{P}$   has the smallest maximum $v-$gap metric among the maximum $v-$gap metrics of all the plants in  $\mathcal{P}$  \cite{jinjgcd}.  In this paper, the central plant  of a set is denoted by subscript `$cp$'.}
\end{defn}

\section{Design  of a Robust Simultaneous Stabilizing Decoupling   Controller for a Fixed-Wing NAV}\label{NRSSD}
A fixed-wing  NAV  is developed to operate in a remotely controlled mode for surveillance without getting detected by radar.  Therefore,  the NAV needs to be stabilized in all operating conditions and hence requires a flight controller. This section explains the synthesizing of a RSSD  output feedback controller for a finite set of LTI MIMO unstable adversely coupled uncertain plants of a fixed-wing NAV. 

\subsection{Fixed-Wing NAV Dynamics and the Problem Statement}\label{P_S}
Typically, an aircraft dynamics is represented by a set of continuous nonlinear first-order differential equations given by
\begin{equation}
{\dot{\mathbf{X}}}=\mathbf{f(X, U)}
\label{nl}
\end{equation}
where   $\mathbf{X}$ is  the state vector, $\mathbf{U}$ is the input vector, and $\mathbf{f(\cdot, \cdot)}$ is a nonlinear vector function  describing  the dynamics. The control system design and stability analysis of an aircraft using these  nonlinear equations are complex. Instead,  an aircraft's LTI  plants are used for   the control system design and stability analysis. These LTI plants are obtained by linearizing the  nonlinear equations given in (\ref{nl}) about a nominal trajectory associated with various steady-state flight conditions.  The above mentioned  linear plants capture the approximate behavior of the nonlinear plant  in the neighborhood of a trim/equilibrium point belonging to the nominal trajectory. Further, the linear plant of an aircraft is decoupled into two, namely, linear  longitudinal and lateral models, when the gyroscopic coupling, aerodynamic cross-coupling, and inertial coupling in the dynamics of an aircraft are~negligibly small. These linear longitudinal and lateral models in state-space forms  are given by
\begin{align}
\dot{\mathbf{x}}_{{Lo}_f}&=A_{{Lo}_f}\mathbf{x}_{{Lo}_f}+B_{{Lo}_f}\mathbf{\delta}_{{Lo}_f}\label{adcp1}\\
\dot{\mathbf{x}}_{{La}_f}&=A_{{La}_f}\mathbf{x}_{{La}_f}+B_{{La}_f}\mathbf{\delta}_{{La}_f}\label{adcp2}
\end{align}
where $\mathbf{x}_{{Lo}_f} \in \mathbb{R}^{\hat{a}}$, $\mathbf{x}_{{La}_f} \in \mathbb{R}^{\hat{b}} $,  $\mathbf{\delta}_{{Lo}_f} \in \mathbb{R}^{\hat{c}}$, and $\mathbf{\delta}_{{La}_f} \in \mathbb{R}^{\hat{d}}$
are longitudinal state vector,  lateral state vector,  longitudinal  control vector, and  lateral control vector, respectively. In (\ref{adcp1})-(\ref{adcp2}), $A_{{Lo}_f} \in \mathbb{R}^{\hat{a} \times \hat{a}}$, $A_{{La}_f} \in \mathbb{R}^{\hat{b} \times \hat{b}}$, $B_{{Lo}_f} \in \mathbb{R}^{\hat{a} \times \hat{c}}$, and $B_{{La}_f} \in \mathbb{R}^{\hat{b} \times \hat{d}}$ represent system and control matrices of the longitudinal and lateral state-space models, respectively.
The dynamics of a fixed-wing NAV has significant cross-coupling  due to the inertial, gyroscopic, and aerodynamic effects \cite{jin}. Because of this cross-coupling, the longitudinal state variables have a significant influence from the lateral modes. Similarly, the lateral state variables have  significant influence from the longitudinal modes. Due to the significant cross-coupling,  the linear longitudinal and lateral models given in (\ref{adcp1}) and (\ref{adcp2}), respectively, are not suitable. The suitable linear model for a fixed-wing NAV is the linear coupled model which is given as
\begin{align}
\dot{\mathbf{x}}_f=A_{f}\mathbf{x}_f+B_{f}\mathbf{\delta}_{f}
\label{adcp3}
\end{align}
where $\mathbf{x}_f \in \mathbb{R}^{\hat{n}}=[\mathbf{x}_{Lo}|\mathbf{x}_{La}]^T$, $A_{f} \in \mathbb{R}^{(\hat{n} \times \hat{n})}=\left[\begin{array}{c;{2pt/2pt}r}A_{{Lo}_f}&A_{{Lo}_f}^{La}\\\hdashline[2pt/2pt]A_{{La}_f}^{Lo}&A_{{La}_f} \end{array}\right]$, $B_{f} \in\mathbb{R}^{(\hat{n} \times \hat{m})}=\left[\begin{array}{c;{2pt/2pt}r}B_{{Lo}_f}&B_{{Lo}_f}^{La}\\\hdashline[2pt/2pt]B_{{La}_f}^{Lo}&B_{{La}_f} \end{array}\right]$,   $\mathbf{\delta}_{f} \in \mathbb{R}^{\hat{m}} =[\mathbf{\delta}_{Lo}|\mathbf{\delta}_{La}]^T$, $\hat{n}=(\hat{a}+\hat{b})$, and $\hat{m}=(\hat{c}+\hat{d})$. Here, $A_{{Lo}_f}^{La}  \in \mathbb{R}^{\hat{a} \times \hat{b}}$, $A_{{La}_f}^{Lo}  \in \mathbb{R}^{\hat{b} \times \hat{a}}$,  $B_{{Lo}_f}^{La}  \in \mathbb{R}^{\hat{a} \times \hat{d}}$, and $B_{{La}_f}^{Lo}  \in \mathbb{R}^{\hat{b} \times \hat{c}}$ are the longitudinal coupling block of $A_f$, lateral coupling block of $A_f$, longitudinal coupling block of $B_f$, and lateral coupling block of $B_f$, respectively.
Along with cross-coupling, if  any $\lambda_{i_f}(\left[\begin{array}{c;{2pt/2pt}r}A_{{Lo}_f}&0\\\hdashline[2pt/2pt] 0&A_{{La}_f} \end{array}\right]) \in \mathcal{C}_-$ migrates towards $\mathcal{C}_+$ due to the presence of $A_{{Lo}_f}^{La}$ and $A_{{La}_f}^{Lo}$ in $A_f$, then  linear dynamics is  adversely coupled. Generally,  the linear dynamics   of a fixed-wing NAV is adversely coupled. One such  example is the  75~mm wingspan fixed-wing NAV mentioned in  \cite{jin}.
In \cite{jin}, it is shown that  $\lambda_{8_f}(\left[\begin{array}{c;{2pt/2pt}r}A_{{Lo}_f}&0\\\hdashline[2pt/2pt] 0&A_{{La}_f} \end{array}\right])=-0.112$ becomes 0.951 due to the presence of $A_{{Lo}_f}^{La}$ and $A_{{La}_f}^{Lo}$ in $A_f$. 
 Apart from the adverse coupling, all the linear plants associated with various steady-state flight conditions defined in the flight envelope of this NAV are unstable with a different number of unstable poles \cite{jinthesis}.  Additionally, the propeller effects, wind effects, sensors dynamics are not considered while developing the dynamic model of 75~mm wingspan fixed-wing NAV. Hence, the plants of this NAV have significant parametric and unmodeled dynamic uncertainties. Also, the autopilot hardware of the fixed-wing NAV  shown in Fig. \ref{fig:autopilot} has resource constraints. Therefore, to accomplish a mission, this NAV requires a RSSD  output feedback flight controller as the gain scheduling, full state feedback controller, successive-loop-closure, and LQR do not suit.\par
Now, to define the problem statement, let $\mathbf{P}_f(s) \in \mathcal{RL_\infty}^{\hat{r} \times \hat{m}}$ be the stabilizable LTI MIMO unstable   adversely coupled  uncertain plant  of a NAV. Here,  $\mathcal{RL_\infty}^{\hat{r} \times \hat{m}}$ symbolizes the space of proper, real-rational, $\hat{r}\times \hat{m}$ matrix-valued functions of $s\in \mathcal{C}$ which are analytic in $\mathcal{C}_+ \cup \mathcal{C}_-$. The state-space form of $\mathbf{P}_f(s)$ is given as
\begin{equation}
\begin{aligned}
\dot{\mathbf{x}}_f=&A_{f}\mathbf{x}_f+B_{f}\mathbf{\delta}_{f}\\
\mathbf{y}_f=&C\mathbf{x}_f
\end{aligned}
\label{eqp}
\end{equation}
where  $\mathbf{y}_f$ $\in$ $\mathbb{R}^{\hat{r}}$ and $C$ $\in$ $\mathbb{R}^{(\hat{r} \times \hat{n})}$
represent the output vector and output matrix, respectively. In the above state-space model, the number of outputs is less than the number of state variables ($\hat{r}$ $<$ $\hat{n}$) as  the autopilot hardware lacks sensors to measure every state variables.  Additionally, the number of unstable poles of $\mathbf{P}_f(s)$ and its  uncertain plant  is not the same. We now consider a finite set,  $\mathcal{P}=\{\mathbf{P}_1(s),\dots \mathbf{P}_f(s),\dots, \mathbf{P}_N(s)\} \subset	 \mathcal{RL_\infty}^{\hat{r} \times \hat{m}}$ contains these stabilizable LTI MIMO unstable adversely coupled  uncertain plants of a NAV. The unstable  plants of $\mathcal{P}$ also  have  a different number of unstable poles. 
The objective here is to find an   output feedback controller, $\mathbf{K} \in \mathbb{R}^{\hat{m} \times \hat{r}}$,   that achieves the following for all the plants belonging to $\mathcal{P}$.
\begin{enumerate}
\item Robust simultaneous stabilization.
\item Mode  decoupling.
\item Specified performance characteristics.
\end{enumerate}
\subsection{Simultaneous Stabilization with a Central Plant}\label{SS}
The method considered in this paper for simultaneously stabilizing a finite set of plants is to achieve a  controller that satisfies the sufficient condition for the existence of a simultaneous stabilizing controller associated with a central plant.  Here,  the sufficient condition given in \cite{jinjgcd} is employed. To explicate this sufficient condition, let the maximum $v$-gap metric of  $\mathbf{P}_i(s) \in \mathcal{P}$, $\epsilon_{\mathbf{P}_i}$ is given  by 
\begin{align}
\begin{split}
\epsilon_{\mathbf{P}_i}=\max\big\{&\delta_v\big(\mathbf{P}_{i}(j\omega),\mathbf{P}_f(j\omega)\big)~\big|~ \mathbf{P}_{i}(s), \mathbf{P}_f(s) \\& \quad \in \mathcal{P}~~ \forall~f \in \{1,2,\dots,N\}  \big\}  
\end{split}
\label{epi}
\end{align}
Following this, let the set, $\bar{\epsilon}$ contains the maximum $v$-gap metrics of all the plants in  $\mathcal{P}$. 
If  $\mathbf{P}_{cp}(s)$ is the central plant  of $\mathcal{P}$ with a maximum $v$-gap metric of $\epsilon_{\mathbf{P}_{cp}}$, then from the definition of central plant, the    $\epsilon_{\mathbf{P}_{cp}}$ = $\min~\bar{\epsilon}$. Thereupon, the sufficient condition  for the existence of a simultaneous stabilizing controller associated with $\mathbf{P}_{cp}(s)$ is given by \cite{jinjgcd}
\begin{align}
b_{\mathbf{P}_{cp},\mathbf{K}}>\epsilon_{\mathbf{P}_{cp}}=\min~\bar{\epsilon}
\label{sscondk1}
\end{align} 
To solve (\ref{sscondk1}), the  identification of $\mathbf{P}_{cp}(s)$ and $\epsilon_{\mathbf{P}_{cp}}$ is needed. This is accomplished by performing $v-$gap metric analysis  that has the following steps At first, find the values of maximum $v-$gap metrics of all the plants  using $\max\big\{\delta_v\big(\mathbf{P}_i(j\omega),\mathbf{P}_f(j\omega)\big)~\big|~ \mathbf{P}_i(s), \mathbf{P}_f(s) \in \mathcal{P}~ \forall~f \in \{1,2,\dots,N\}  \big\}$. Among these values, the smallest value gives $\epsilon_{\mathbf{P}_{cp}}$. Also, the plant  associated with this value is the central plant. The detailed explanation of the SS with a central plant  is given in the supporting material. \par
When the maximum $v-$gap metric of the central plant becomes smaller and closure to zero, similar closed-loop characteristics can be provided to all plants in $\mathcal{P}$ using a simultaneous stabilizing controller that satisfies (\ref{sscondk1}). The maximum $v-$gap metric of the central plant  is reduced and the frequency characteristics of the plants in $\mathcal{P}$ is  improved by cascading these plants with suitable pre and post compensators, $\mathbf{W_{in}}(s)$ and $\mathbf{W_{ot}}(s)$, respectively \cite{jinjgcd}. 
The problem of simultaneous reduction of the maximum $v-$gap metric of the central plant  and the improvement of frequency characteristics of all plants in $\mathcal{P}$ is termed as the simultaneous closeness-performance enhanced (SCP)  problem. Let $\kappa= \big\{ \mathbf{\tilde{P}}_f(s) ~ \big| ~ \mathbf{\tilde{P}}_f(s) \in \mathcal{RL_\infty}^{\hat{r} \times \hat{m}}, \mathbf{\tilde{P}}_f(s)= \mathbf{W_{ot}}(s)\mathbf{P}_f(s)\mathbf{W_{in}}(s),~$ $\mathbf{P}_f(s)$ $\in \mathcal{P}$, $\mathbf{W_{ot}}(s)$ $\in$ $\mathcal{RH_\infty}^{\hat{r} \times \hat{r}}$,   $\mathbf{W_{in}}(s)$ $\in$ $\mathcal{RH_\infty}^{\hat{m} \times \hat{m}}$, $\forall~f \in \{1,2,\dots,N\}\big\}$. Now, the SCP problem is stated as the following. Find  $\mathbf{W_{in}}(s)$ and $\mathbf{W_{ot}}(s)$ such that the following are achieved.
\begin{enumerate}
 \item The maximum $v-$gap metric of the central plant  of $\kappa$ (SCP central plant)  less than the maximum $v-$gap metric of the central plant of $\mathcal{P}$. 
 \item $\mathbf{W_{in}}(s)$ and $\mathbf{W_{ot}}(s)$  induce desired frequency characteristics on all the plants (performance enhanced plants) that belongs to $\kappa$.  
\end{enumerate}
Let $\hat{\epsilon}$ is defined as
\begin{align}
\begin{split}
\hat{\epsilon}&={}\big\{ \max\big\{\delta_v\big(\mathbf{W_{ot}}(j\omega)\mathbf{P}_1(j\omega)\mathbf{W_{in}}(j\omega),\mathbf{W_{ot}}(j\omega)\mathbf{P}_f(j\omega)\\ &\qquad\mathbf{W_{in}}(j\omega)\big)~|~ \forall~f \in \{1,2,..,N\}\big\},   \dots,   \max\big\{\delta_v\big(\mathbf{W_{ot}}(j\omega)\\ &\qquad \mathbf{P}_{N}(j\omega)\mathbf{W_{in}}(j\omega),\mathbf{W_{ot}}(j\omega)\mathbf{P}_f(j\omega)\mathbf{W_{in}}(j\omega)\big)~|~\forall~f \in \\ &\qquad \{1,2,..,N\}\big\} \big\}
\end{split}
\label{Lpi}
\end{align}
Then, the optimization problem to find the solution of the SCP  problem is given as  \cite{jinjgcd}
\begin{equation}
\begin{aligned}
& \underset{\textbf{Q}}{\text{minimize}}
& & J_1=\min~\hat{\epsilon}\\
& \text{subject to}
& & \text{1) \textit{Bound constraints on the coefficients of  pre} }\\
& & &\text{~~~~\textit{and post compensators}} \\ 
& & & \text{2) \textit{No pole-zero cancellation between} }\\
& & & \text{~~~\textit{compensators and the plants of $\mathcal{P}$}}
\end{aligned}
\label{pbm1}
\end{equation}
In    (\ref{pbm1}), $\mathbf{Q}$  represents the set that contains the coefficients of $\mathbf{W_{in}}(s)$ and $\mathbf{W_{ot}}(s)$.
 The bound constraints on the coefficients of  compensators are to provide desired frequency characteristics to the plants of $\mathcal{P}$. These constraints prevent the minimization of  $J_1$  with any $\mathbf{W_{in}}(s)$ and $\mathbf{W_{ot}}(s)$ that degrade the frequency characteristics of all the augmented plants. The performance index of (\ref{pbm1}) is non-convex and non-smooth. Hence, an iterative algorithm is required to obtain a solution. Note that the pre/post compensators are physically present in the closed-loop. Therefore, these compensators need to be appended with the hardware. 

\subsection{Appending Decoupling and  Robustness Features to a Simultaneously Stabilizing   Output Feedback Controller, $\mathbf{K}$}\label{RSSOFC}

\subsubsection{Appending Decoupling to  $\mathbf{K}$}
To explain the method that appends decoupling features  to $\mathbf{K}$, let  us consider  $\mathbf{W_{in}}(s)$ and $\mathbf{W_{ot}}(s)$ as the solution of  (\ref{pbm1}) and $\mathbf{\tilde{P}}_{cp}(s)=\mathbf{W_{ot}}(s)\mathbf{P}_i(s)\mathbf{W_{in}}(s)$ as the SCP central plant with  maximum $v-$gap metric of $\bar{J}_1$. 
Let the generalized stability margin of $\mathbf{\tilde{P}}_{cp}(s)$ is given as
\begin{equation}
b_{\mathbf{\tilde{P}}_{cp},\mathbf{K}}=\sfrac{1}{\bigg|\bigg|\left[ \begin{array}{c} {\mathbf{\tilde{P}}_{cp}(s)} \\ {\mathbf{I}}  \end{array} \right ]\left(\begin{array}{c} {\mathbf{I-K\tilde{P}}_{cp}(s)}  \end{array} \right )^{-1}\left[ \begin{array}{cc} {\mathbf{-I}} & {\mathbf{K}}  \end{array} \right ]\bigg|\bigg|_\infty}
\label{LPssc1}
\end{equation}
The   sufficient condition associated with $\mathbf{\tilde{P}}_{cp}(s)$ for the SS  is given as 
\begin{align}
\begin{split}
b_{\mathbf{\tilde{P}}_{cp},\mathbf{K}}>\bar{J}_1
\end{split}
\label{LPssc}
\end{align}
\noindent The condition given in  (\ref{LPssc})  is achieved by minimizing the infinity norm of $b_{\mathbf{\tilde{P}}_{cp},\mathbf{K}}$ using a finite number of stabilizing  output feedback controllers of $\mathbf{\tilde{P}}_{cp}(s)$. To solve a RSSD problem,  these controllers need to provide  desired mode decoupling  to the CL plants besides minimizing the infinity norm. Eigenstructure assignment is a well know linear controller design technique that assigns the desired CL eigenvalues and eigenvectors. Using this technique, the modes of a dynamic system can be decoupled from the state variable responses by assigning zero value to the cross-coupled eigenvector elements of the desired eigenvectors.
Following the eigenstructure assignment technique described in \cite{patton}, $\hat{r}$ desired CL eigenvalues and $\hat{m}$ entries of the corresponding  eigenvectors are assigned with the output feedback controller, $\mathbf{K} \in \mathbb{R}^{\hat{m} \times \hat{r}}$ given by
\begin{align}
\mathbf{K}=W(CR)^{-1}
\label{eq:ei1}
\end{align}
\noindent where $C$ denotes the output matrix of  $\mathbf{\tilde{P}}_{cp}(s)$.  $W$=$[W_1,\dots, W_l,\dots,W_{\hat{r}}]$
and  $R$=$[R_1,\dots,R_l,\dots,R_{\hat{r}}]$ are defined as
\begin{align}
\intertext{if $\lambda_l$ is real, then } 
\left[ \begin{array}{c} {R_l} \\ {W_l}  \end{array} \right] &\in
    \{S_l : \left[ \begin{array}{cc} {A-\lambda_lI} & {B}  \end{array} \right] \left[ \begin{array}{c} {R_l} \\ {W_l}  \end{array} \right]=0 \} \nonumber  
  \intertext{if $\lambda_l$ is complex, then} 
\begin{split}
\left[ \begin{array}{c} {R_l} \\ {W_l}  \end{array} \right]& \in \{S_l:\left[ \begin{array}{llll} {A-\lambda_l^{re}I} & {\lambda_l^{im}I} &{B}&{0} \\{-\lambda_l^{im}I} & {A-\lambda_l^{re}I} &{0}&{B}\end{array} \right]\left[ \begin{array}{l} {R_l^{re}} \\ {R_l^{im}} \\{W_l^{re}}\\{W_l^{im}}  \end{array} \right]\\ & \quad =0 \} 
\end{split}
 \label{v_s}
\end{align}
\noindent where  
$R_l$=$[R_l^{re}|R_l^{im}]^T$ and 
$W_l$=$[W_l^{re}|W_l^{im}]^T$ when $\lambda_l$ is complex. Furthermore, in (\ref{v_s}), $A$ and $B$ represent system and input matrices of $\mathbf{\tilde{P}}_{cp}(s)$, respectively. Also, $\lambda_l$, $\lambda_l^{re}$, and $\lambda_l^{im}$ denote $l$th desired eigenvalue, real part of the $l$th desired eigenvalue, and imaginary part of the $l$th desired eigenvalue, respectively. $R_l$ and 
$W_l$ are computed using specific values assigned to $\hat{m}$ entries of   eigenvectors associated with a desired CL eigenvalue. For decoupling, these  $\hat{m}$ entries  are assigned with zero value. \par
The controller $\mathbf{K}$ given by  (\ref{eq:ei1}) depends on the allowable eigenvector space \cite{patton}. This  eigenvector space depends on the desired CL eigenvalues, system, input, and output matrices of the plant. For the given desired CL eigenvalues and eigenvectors, a single $\mathbf{K}$ is obtained. However,  a single $\mathbf{K}$ does not minimize the infinity norm given in   (\ref{LPssc1}). This issue is solved using the following approach. Usually, the desired CL eigenvalues are selected based on  specific characteristics. For example, one such characteristic is the damping ratio. If the desired CL eigenvalues require a damping ratio greater than or equal to $\zeta_{de}$, then these eigenvalues lie in a region defined by the constant damping ratio line. Following this, a region, $S_1$ in $\mathbb{C}_-$ is predefined using the specific characteristics required for the desired CL eigenvalues.  There will be  an infinite number of desired CL eigenvalues in this region.
Now, obtain the $\mathbf{K}$ using a search and minimization algorithm that performs the following functions.
\begin{enumerate}
\item Search in  $S_1$ and obtain  the desired CL eigenvalues, \textrm{$\lambda_l$ $\in$ $S_1$ $\forall$ $l$ $\in$ \{1,2,\dots,$\hat{r}$\}}.
\item Compute  $\mathbf{K}=W(CR)^{-1}$ using the desired CL eigenvalues  from $S_1$.
\item Check whether $\mathbf{K}=W(CR)^{-1}$  stabilizes  $\mathbf{\tilde{P}}_{cp}(s)$. For this purpose, examine whether the remaining $n-\hat{r}$ ($n$ is the number of states of $\mathbf{\tilde{P}}_{cp}(s)$) CL eigenvalues, \textrm{$\lambda_i$  $\forall$ $i$ $\in$ \{$\hat{r}$+1, $\hat{r}$+2,\dots,$n$\}}, are in $S_1$. If \textrm{$\lambda_i$ $\in$ $S_1$ $\forall$ $i$ $\in$ \{$\hat{r}$+1, $\hat{r}$+2,\dots,$n$\}}, then $\mathbf{K}=W(CR)^{-1}$  stabilizes  $\mathbf{\tilde{P}}_{cp}(s)$.
  Therefore, if \textrm{$\lambda_i$ $\in$ $S_1$ $\forall$ $i$ $\in$ \{$\hat{r}$+1, $\hat{r}$+2,\dots,$n$\}}, then go to 4. Otherwise, go to 1.
\item Minimize infinity norm  given  in  (\ref{LPssc1}) using $\mathbf{K}$.
\item Repeat 1, 2, 3, and 4 until $\mathbf{K}$ satisfies the
sufficient condition for SS given in (\ref{LPssc}).
\end{enumerate}
 In brief, if the search and minimizing algorithm obtains a $\mathbf{K}=W(CR)^{-1}$  that achieves (\ref{LPssc}) with all $\lambda(A+B\mathbf{K}C) \in S_1$, then $\mathbf{K}$ provides desired decoupling to the CL plant  of $\mathbf{\tilde{P}}_{cp}(s)$ and SS to all the plants in $\kappa$.
Note that even with this $\mathbf{K}$, when the CL characteristics of  $\mathbf{\tilde{P}}_{cp}(s)$ and the CL characteristics of remaining plants of $\kappa$ are dissimilar, then a new solution for the SCP problem is required. This solution needs to provide a new  $\mathbf{\tilde{P}}_{cp}(s)$ with a maximum $v-$gap metric that should be less than the previous $\bar{J}_1$. This is because the CL characteristics of  $\mathbf{\tilde{P}}_{cp}(s)$ and the remaining plants of $\kappa$ will be similar when $\bar{J}_1$ of $\mathbf{\tilde{P}}_{cp}(s)$ gets closer to zero.
\newtheorem{rem}{Remark}[section]
\begin{rem}
The term inside the infinity norm of (\ref{LPssc1}) is the CL transfer function of $\mathbf{\tilde{P}}_{cp}(s)$ from  exogenous inputs to controlled variables.
Hence, minimizing the infinity norm
 also improves tracking performance, disturbance rejection, and noise rejection of the CL plant.   Following this, if the maximum $v-$gap metric of  $\mathbf{\tilde{P}}_{cp}(s)$ is closer to zero, then the closed-loop performance characteristics of  all the plants in $\kappa$ will be similar to the closed-loop performance characteristics of $\mathbf{\tilde{P}}_{cp}(s)$   achieved by minimizing the infinity norm of (\ref{LPssc1}).
\end{rem}
 \subsubsection{Appending Robustness to a Simultaneously Stabilizing Decoupling  Output Feedback Controller, $\mathbf{K}$}
An  output feedback controller, $\mathbf{K}=W(CR)^{-1}$  that satisfies (\ref{LPssc})  accomplishes SS and decoupling. Besides this, an  RSSD  output feedback controller needs to provide robustness to each plants belonging to $\kappa$.  The concept of robust SS is illustrated in Fig. \ref{fig:RSS}. This figure suggests that a single controller must simultaneously stabilize all plants belonging to $\kappa$, but also need to stabilize each plant  against certain normalized right coprime factor perturbations. We now propose the following lemma that illustrates the robustness features of a simultaneous stabilizing decoupling controller.
\begin{figure}[!h]
\begin{center}
\includegraphics[width=3.2in, height=1.8in]{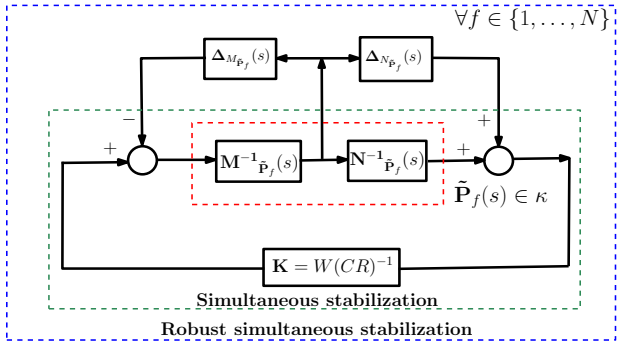}
\caption{\textbf{Concept of robust simultaneous stabilization}}
\label{fig:RSS}
\end{center}
\end{figure}
\newtheorem{lem}{Lemma}[section]
\begin{lem}
Let $W$ and $R$ are chosen for achieving the desired decoupling.  Then, the   output feedback controller, $\mathbf{K}=W(CR)^{-1}$, of $\mathbf{\tilde{P}}_{cp}(s)$ accomplishes robust simultaneous stabilization of N plants  belonging to $\xi=\big\{  \mathbf{\tilde{P}}(s)~ \big|~ \mathbf{\tilde{P}}(s) \in \kappa, \delta_v(\mathbf{\tilde{P}}_{cp}(j\omega),\mathbf{\tilde{P}}(j\omega)) \leq \bar{J}_1 ~\mathbf{\tilde{P}}_{cp}(s) \in \kappa \big\}$ and provides exact desired decoupling to the CL plant  of $\mathbf{\tilde{P}}_{cp}(s)$, if the following conditions are satisfied.
\begin{enumerate}[label={[\arabic*]}]
\item \textrm{$\lambda_l$ $\in$ $S_1$ $\forall$ $l$ $\in$ \{1,2,\dots,$\hat{r}$\}}
\item \textrm{$\lambda_i$ $\in$ $S_1$ $\forall$ $i$ $\in$ \{$\hat{r}$+1,$\hat{r}$+2,\dots,n\}}
\item $b_{\mathbf{\tilde{P}}_{cp},\mathbf{K}}>\bar{J}_1$
\end{enumerate}
\label{lemma2}
\end{lem}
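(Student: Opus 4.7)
The plan is to split the conclusion of the lemma into three separable claims and discharge each one using a different ingredient from the earlier material: (a) internal stability of the closed loop formed by $\mathbf{K}=W(CR)^{-1}$ and $\mathbf{\tilde{P}}_{cp}(s)$, together with exact decoupling on that closed loop; (b) simultaneous stabilization of every plant in $\xi$; and (c) robustness of each such closed loop against normalized right coprime factor perturbations. The structure mirrors Section~\ref{RSSOFC}: the eigenstructure construction supplies (a), the $v$-gap sufficiency condition (\ref{LPssc}) supplies (b), and the generalized stability margin definition (\ref{eq:RCF1_1_1}) supplies (c).

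For (a), I would begin by observing that conditions \textrm{[1]} and \textrm{[2]} together place all $n$ closed-loop eigenvalues $\lambda(A+B\mathbf{K}C)$ inside $S_1\subset\mathcal{C}_-$, so the closed loop $[\mathbf{\tilde{P}}_{cp}(s),\mathbf{K}]$ is internally stable; in particular $b_{\mathbf{\tilde{P}}_{cp},\mathbf{K}}$ as in (\ref{LPssc1}) is finite and nonzero, which makes condition \textrm{[3]} meaningful. Exact decoupling on this closed loop follows from the way $W$ and $R$ were constructed via (\ref{v_s}): since $W$ and $R$ realize the first $\hat{r}$ desired closed-loop eigenvalue/eigenvector pairs with zero entries in the cross-coupling positions of the specified eigenvectors, the substitution $\mathbf{K}=W(CR)^{-1}$ assigns exactly those eigenvectors and hence decouples the corresponding modes from the cross-coupled state variables, as per the eigenstructure assignment result of \cite{patton}.

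For (b), I would invoke the central-plant sufficiency condition recalled in Section~\ref{SS}. By the definition of $\xi$, every $\mathbf{\tilde{P}}(s)\in\xi$ satisfies $\delta_v(\mathbf{\tilde{P}}_{cp}(j\omega),\mathbf{\tilde{P}}(j\omega))\le\bar{J}_1$, so $\bar{J}_1$ plays the role of $\epsilon_{\mathbf{P}_{cp}}=\min\bar{\epsilon}$ in (\ref{sscondk1}) for the family $\xi$ with central plant $\mathbf{\tilde{P}}_{cp}(s)$. Condition \textrm{[3]}, $b_{\mathbf{\tilde{P}}_{cp},\mathbf{K}}>\bar{J}_1$, is therefore exactly the sufficient condition (\ref{sscondk1}) instantiated on $\xi$, and it yields simultaneous stabilization of all $N$ members of $\xi$ by $\mathbf{K}$.

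For (c), I would appeal to the standard interpretation of the generalized stability margin $b_{\mathbf{\tilde{P}}_{cp},\mathbf{K}}$: it equals the radius of the largest normalized right coprime factor perturbation ball around $\mathbf{\tilde{P}}_{cp}(s)$ for which $\mathbf{K}$ remains stabilizing, and the same margin bounds the admissible $v$-gap perturbations uniformly. Since $\xi$ was defined as a $v$-gap ball of radius $\bar{J}_1<b_{\mathbf{\tilde{P}}_{cp},\mathbf{K}}$, each $\mathbf{\tilde{P}}(s)\in\xi$ itself admits a nontrivial coprime factor perturbation neighborhood on which $\mathbf{K}$ still stabilizes, completing the robust simultaneous stabilization claim. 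The main obstacle I expect is bookkeeping rather than depth: one must verify that the decoupling choice of $W,R$ does not conflict with condition \textrm{[3]}, i.e.\ that the search-and-minimization procedure of Section~\ref{RSSOFC} can in principle return a gain satisfying all three conditions simultaneously; the lemma presumes such a gain exists and asserts the joint guarantees, so the argument reduces to cleanly combining the three ingredients above without circularity.
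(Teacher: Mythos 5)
Your proposal is correct and follows essentially the same route as the paper: conditions [1] and [2] give internal stability of the nominal loop with decoupling supplied by the eigenstructure construction of $W$ and $R$, while condition [3] together with the $v$-gap robust-stabilization result guarantees that the entire ball of radius $\bar{J}_1$ about $\mathbf{\tilde{P}}_{cp}(s)$ (and hence $\xi$ and its admissible perturbations) is stabilized. The paper phrases the robustness step via an explicitly defined ball $B(\mathbf{\tilde{P}}_{cp}(s),\bar{J}_1)$ and perturbed plants landing in $B\setminus\xi$, which is the same argument you give in terms of the stability-margin radius.
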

\begin{proof}
$\mathbf{K}$ stabilizes $\mathbf{\tilde{P}}_{cp}(s)$ when the conditions [1] and [2] are met.  Concurrently, $\mathbf{K}$ provides exact desired decoupling to $\mathbf{\tilde{P}}_{cp}(s)$ and  simultaneously stabilizes N plants   belonging to $\xi$ when $\mathbf{K}$ achieves all  three conditions. Now to prove  $\mathbf{K}$  achieves robust simultaneous stabilization of N plants,  let us define the  ball of plants, $B(\mathbf{\tilde{P}}_{cp}(s),\bar{J}_1)$,  as
\begin{align}
\begin{split}
B(\mathbf{\tilde{P}}_{cp}(s),\bar{J}_1)={}&\big\{\textrm{${\mathbf{\breve{P}}}(s)$ $\big|$ $\delta_v$\big($\mathbf{\tilde{P}}_{cp}(j\omega)$, ${\mathbf{\breve{P}}}(j\omega)$\big) $\leq$ $\bar{J}_1$}, \\ & \qquad \textrm{$\mathbf{\tilde{P}}_{cp}(s)$ $\in$ $\xi$,  ${\mathbf{\breve{P}}}(s)$ $\in$ $\mathcal{RL_\infty}^{\hat{r} \times \hat{m}}$}\big\}
\end{split}
\label{ball2}
\end{align}
When $\mathbf{K}$ satisfies all the three conditions, any plant   belonging to  $\partial B$ (boundary of $B(\mathbf{\tilde{P}}_{cp}(s), \bar{J}_1)$) is also stabilized by $\mathbf{K}$ as  $v-$gap metrics between the plants  belonging to $\partial B$ and $\mathbf{\tilde{P}}_{cp}(s)$ are $\bar{J}_1$.   $v-$gap metrics between $\mathbf{\tilde{P}}_{cp}(s)$ and the  plants in $Int($B$)$ (interior of $B(\mathbf{\tilde{P}}_{cp}(s), \bar{J}_1)$)   are less than $\bar{J}_1$ as    $\bar{J}_1$  is the  maximum $v-$gap metric of $\mathbf{\tilde{P}}_{cp}(s)$. Therefore,  $\mathbf{K}$ simultaneously stabilizes all the plants belong to $B(\mathbf{\tilde{P}}_{cp}(s),\bar{J}_1)$ if  the conditions [1], [2], and [3] are met.  As $\xi$ $\subset$ $B(\mathbf{\tilde{P}}_{cp}(s), \bar{J}_1)$, $\mathbf{K}$ simultaneously stabilizes all the N plants. Even if any plant, $\mathbf{\tilde{P}}(s)$ $\in$ $\xi$ is perturbed to form the plant, $\underline{\mathbf{P}}(s)$  $\in$ $B(\mathbf{\tilde{P}}_{cp}(s), \bar{J}_1)\backslash \xi$ is also stabilized by $\mathbf{K}$. Therefore, $\mathbf{K}$ achieves robust simultaneous stabilization of  N  plants belonging to $\xi$. This establishes the proof.
\end{proof}
The three conditions of Lemma \ref{lemma2} imply sufficient conditions for the existence of a RSSD  output feedback controller. Moreover,  the structure of $\xi$  suggests  any $\mathbf{K}$ that satisfies the three conditions of Lemma~\ref{lemma2} also achieves robust simultaneous stabilization of N plants of $\mathcal{P}$.
\subsection{NN-RSSD   Algorithm}
 An iterative algorithm is required for solving a RSSD problem due to the NP-hard nature of  output feedback and SS problems.  The controller generated by the iterative algorithm becomes the solution of a RSSD problem once this controller satisfies three conditions of Lemma \ref{lemma2}.
Using these conditions, we now recast a RSSD problem as an optimization problem that is  given as
\begin{equation}
\begin{aligned}
& \underset{\mathbf{K}}{\text{minimize}}
& &J_2=\bigg|\bigg|\left[ \begin{array}{c} {\mathbf{\tilde{P}}_{cp}(s)} \\ {\mathbf{I}}  \end{array} \right ]\left(\begin{array}{c} {\mathbf{I-K\tilde{P}}_{cp}(s)}  \end{array} \right )^{-1}\left[ \begin{array}{cc} {\mathbf{-I}} & {\mathbf{K}}  \end{array} \right ]\bigg|\bigg|_\infty\\
& \text{s.t.} 
& &  \textrm{$\lambda_l$ $\in$ $S_1$ $\forall$ $l$ $\in$ \{1,2,\dots,$\hat{r}$\}}\\
&&& \mathbf{K}=W(CR)^{-1}\\
&&& \textrm{$\lambda_i$ $\in$ $S_1$ $\forall$ $i$ $\in$ \{$\hat{r}$+1,$\hat{r}$+2,\dots,n\}}
\end{aligned}
\label{pbm2}
\end{equation}
The solution of a RSSD problem is a    $\mathbf{K}  \in \mathbb{R}^{\hat{m} \times \hat{r}}$ that reduces $J_2$  below $\bar{J}_1$ with   all the constraints of (\ref{pbm2}) satisfied.   A  genetic algorithm based offline iterative algorithm referred to as the NN-RSSD  algorithm is developed for sequentially solving the optimization problems associated with the SCP  problem and the RSSD problem. The genetic algorithm is employed in  the iterative algorithm as the optimization problem associated with the SCP  problem is non-convex and non-smooth.  
At first, the NN-RSSD  algorithm solves the optimization problem of the SCP  problem given in (\ref{pbm1}). 
Then,  the optimization problem of the RSSD problem given in (\ref{pbm2}) is solved utilizing the solution of the SCP  problem.
For this purpose, the NN-RSSD algorithm has two population-based genetic algorithm (GA) solvers, namely, GA-SCP and GA-RSSD. GA-SCP solves the optimization problem of the SCP  problem, whereas  GA-RSSD solves the optimization problem of the RSSD problem. In GA-SCP and GA-RSSD, 
 GA employs the following steps.
 \begin{enumerate}
 \item Randomly generate initial  values of search variables.
 \item Compute the value of fitness function using  the feasible values among the values of search variables.
 \item Generate new  values of search variables by examining  the fitness value of search variables and   applying genetic operators on the values of search variables.
 \item Repeat 2 and 3 until the stopping criterion is satisfied.
 \end{enumerate}
\textit{Search Variables:} The search variables of  GA-SCP   are the coefficients of $\mathbf{W_{ot}}(s)$ and $\mathbf{W_{in}}(s)$.
The feasible values of these search variables are those which  satisfy all the constraints of the  problem given in (\ref{pbm1}).
In the  NN-RSSD  algorithm, the chosen desired eigenvector element for mode decoupling is not assigned a zero value, but it is confined in a bounded set and used as a search variable by GA-RSSD. This reduces the failure of the eigenstructure assignment algorithm in computing the controller gain using (\ref{eq:ei1}) when all the chosen  desired eigenvector elements are assigned with zero value. Further, the availability of the   chosen  desired eigenvector element as  search variables  contributes additional freedom in the computation of the controller defined by (\ref{eq:ei1}).  The bound on a chosen  desired eigenvector element  is determined by the magnitude of unfavorable dominance of a mode on the state variable. Also,   the search variables of GA-RSSD include $\hat{r}$ desired CL eigenvalues.
The feasible values of $\hat{r}$ eigenvalues and chosen desired eigenvectors are those belonging to $S_1$ and the user-defined bounded set, respectively.

\textit{Fitness functions:} The fitness function of GA-SCP is the performance index, $J_1$, of  the optimization  problem given in (\ref{pbm1}). 
 The fitness function of GA-RSSD is the performance index, $J_2$, of  the optimization problem given in (\ref{pbm2}).
  
\textit{Termination Conditions:} The NN-RSSD algorithm  terminates when the number of generation of GA-SCP exceeds its maximum value or when GA-RSSD generates a feasible RSSD  output feedback controller.\par
The working of the NN-RSSD algorithm is explained as follows. At first,  GA randomly obtains the initial values of the search variables of GA-SCP. Then, the fitness function of GA-SCP is evaluated, as shown in Algorithm-1. During this fitness evaluation, if the  obtained solution satisfies an adaptive constraint, then the GA-SCP invokes GA-RSSD   and passes $\mathbf{\tilde{P}}_{cp}(s)$ and its maximum $v-$gap metric to the GA-RSSD. The adaptive constraint employed here is \textit{fitness value of GA-SCP} $<$ $\bar{J}_1$.  During the fitness function evaluation of the GA-RSSD, if the value of the fitness function satisfies the condition, $J_2 < \frac{1}{\bar{J}_1}$,  then that controller becomes the feasible controller. Thereafter, the NN-RSSD algorithm terminates. When the GA-RSSD could not find a feasible controller for a $\mathbf{\tilde{P}}_{cp}(s)$, then $\bar{J}_1$ is assigned with the fitness value that previously satisfied the adaptive constraint.  Note that, new $\bar{J}_1$ $<$ old $\bar{J}_1$ enables GA-SCP to search for newer $\mathbf{\tilde{P}}_{cp}(s)$ that has a maximum $v-$gap metric which is  less than new $\bar{J}_1$. This enhances the possibility of obtaining a feasible RSSD  output feedback controller. During the fitness evaluation of GA-SCP, if the solution obtained does not satisfy an adaptive constraint, then GA searches for new feasible values. Note that for computing $\mathbf{K}$ requires the inverse of  $CR$. If the condition number of $CR$, $\kappa(CR)$, is infinity, then $CR$ singular. Following this, the existence of the inverse of  $CR$ is guaranteed by satisfying the condition, $\kappa(CR)< 10^5$.  If this condition is satisfied,  then compute $\mathbf{K}=W(CR)^{-1}$ else GA obtain new feasible values of GA-RSSD's search variables. The pseudocode of the NN-RSSD algorithm is given in Algorithm-1 and Algorithm-2.  
 
\section{Design and Performance  Evaluation  of  the RSSD  Output Feedback Controller for a Fixed-Wing NAV}\label{DEex}
This section discusses the design of a RSSD output feedback controller using the NN-RSSD  algorithm for eight unstable MIMO adversely coupled  plants of the NAV mentioned in \cite{jin}. More details about the nonlinear and linear plants of the NAV,    desired eigenvector elements,  and  bounds  of pre/post compensators are given in  the supporting material.
 Among the output variables of these plants (pitch angle ($\theta$), pitch rate ($q$),  yaw rate ($r$), roll rate ($p$), and roll angle ($\phi$)),  the controlled  variables are $\theta$ and $\phi$.  
The frequency characteristics of the eight plants require modifications as they do not have the desired characteristics.   To incorporate necessary frequency characteristics on the output sensitivity function, we require $\mathbf{W_{ot}}(s)$ and $\mathbf{W_{in}}(s)$ that  satisfy $\underline{\sigma}(\mathbf{W_{ot}(0)}\mathbf{P}_f(0)\mathbf{W_{in}(0)}) >~6$~dB  and $\underline{\sigma}(\mathbf{W_{ot}}(j\omega)\mathbf{P}_f(j\omega)\mathbf{W_{in}}(j\omega))~>~0$~dB  up to the desired frequency for all $f \in \{1,2,\dots,8\}$. Here, the desired frequency ranges from 15.28~rad/s   to 53.3~rad/s. $\mathbf{P}_{2}(s)$ among the eight plants is identified as the central plant as it has the smallest maximum $v-$gap metric of 0.4489 ($\min~\bar{\epsilon}$). 
\begin{algorithm}[!h]
\caption{Pseudocode of NN-Algorithm}
\begin{algorithmic}[1]
   \State Input: $\mathcal{P}$, $\bar{J}_1$,  $n$, and $\hat{r}$
    \If{number of generation of GA-SCP$\leq$ maximum value}
    \State GA obtain feasible values of GA-SCP's search variables
   \State Compute: $\mathbf{W_{in}}(s)$ and $\mathbf{W_{ot}}(s)$ using (13)-(16) given in \cite{jinjgcd}.
    \State Compute: $J_1$ for fitness evaluation 
    \If{Fitness value $<$ $\bar{J}_1$}
        \State Obtain: $\mathbf{\tilde{P}}_{cp}(s)$ by performing $v-$gap metric analysis on $\xi$
        \State Set $\bar{J}_1$=Fitness value
        \State \Call{GA-RSSD}{$\mathbf{\tilde{P}}_{cp}(s)$, $\bar{J}_1$, $n$, $\hat{r}$}
         \State go to 2
        \Else
        \State go to 2
     \EndIf
      \Else
        \State Exit 
     \EndIf
 \label{Alg-1}
\end{algorithmic}
\end{algorithm}
\begin{algorithm}[!h]
	\caption{Pseudocode of GA-RSSD}
	\begin{algorithmic}[1]
		\Function{GA-RSSD}{$\mathbf{\tilde{P}}_{cp}(s)$, $\bar{J}_1$, $n$, $\hat{r}$}
		\If{number of generation of GA-RSSD$\leq$ maximum value}
		\State GA obtain feasible values of GA-RSSD's search variables
		\State Compute: $W$ and $R$ (using $A$ and $B$ of $\mathbf{\tilde{P}}_{cp}(s)$ and feasible values of search variables of GA-RSSD)
		\State Compute: $CR$ and $\kappa(CR)$
		\If{ $\kappa(CR)$ $<$ 10$^5$}
		\State Compute: $\mathbf{K}=W(CR)^{-1}$
		\If{$\lambda_i(A+B\mathbf{K}C)~ \forall~ i \in \{1,2,\dots,n\} \in  S_1$} 
		\State Compute: $J_2$ for fitness evaluation
		\If{Fitness value $<$ $\frac{1}{\bar{J}_1}$}
		\State Controller found ($\mathbf{K}$) and Exit
		\Else
		\State go to 2
		\EndIf
		\Else
		\State go to 2
		\EndIf 
		\Else
		\State go to 2
		\EndIf
		\Else
		\State \textbf{return}
		\EndIf
		\EndFunction
		\label{Alg-11}
	\end{algorithmic}
\end{algorithm} 
\subsection{Flight Control System Design Specifications and Controller Gain Synthesis}
\begin{enumerate}
\item \textit{Damping Ratio Specifications:}
The minimum required damping ratios for all the oscillatory modes of the NAV are taken as  0.30.
\item \textit{Tracking Specifications:} The tracking error in the pitch angle and the roll angle of the NAV in the absence of wind disturbances concerning the reference signal are ${\pm0.0087}$~rad  and ${\pm0.0175}$~rad, respectively. Also, in the presence of wind disturbances, root mean square deviation (RMS) of the pitch angle and roll angle need not exceed 0.0873~rad  and 0.175~rad, respectively.
\item \textit{Robustness Requirements:}
An uncertainty level of 40~$\%$ is assigned to all the static, dynamic, and control derivatives of the NAV. Concurrently, for each CL plant, the controller has to provide a  multiloop disk-based gain  margin (MDGM)   and  phase  margin (MDPM) of  $\pm$3~dB and $\pm$20~deg, respectively.
\item \textit{Desired eigenvalues and eigenvectors:}
The eigenvalues of the coupled spiral, coupled Dutch roll, and coupled short period modes of the NAV are assigned as these eigenvalues do not possess the desired stability and performance characteristics. For more details about these modes and their coupling characteristics, one should refer \cite{jin}. The bounds on the chosen  desired eigenvector elements for mode decoupling  are given in Table \ref{tab:dec1}.
In Table \ref{tab:dec1}, $\bar{u}$, $\bar{w}$, $\bar{r}$, $\bar{p}$, and $\bar{\phi}$ denote the eigenvector elements associated with the state variables, $u$ (translational velocity in $\mathbf{x}$ body-axis), $w$ (translational velocity in $\mathbf{w}$ body-axis), $r$, $p$, and $\phi$, respectively.  
\end{enumerate}
 \begin{table}[h!]
\centering
\caption{\label{tab:dec1} Bounds on the chosen  desired eigenvector elements}
\begin{tabular}{llll}
\hline
Mode &Eigenvector&Bound\\
 &Element&\\
\hline
Coupled Spiral&$\bar{u}, \bar{w}, \bar{r}  $&$\pm$0.01, $\pm$0.01,$\pm$0.01\\
Coupled Dutch &$\bar{p}$&[$\pm$0.049]$\pm$[$\pm$0.049]$j$\\
roll&$\bar{\phi}$&[$\pm$0.1]$\pm$[$\pm$0.08]$j$\\
Coupled Short&$\bar{r}$&[$\pm$0.015]$\pm$[$\pm$0.15]$j$\\
period&&&\\
\hline
\end{tabular}
\end{table} 

 The NN-RSSD  algorithm  is executed with inputs, $\hat{r}=5$, $\hat{m}=3$, $n=19$, $\bar{J}_1$=0.4489, $\mathcal{P}$,  the bounds on the coefficients of $\mathbf{W_{ot}}(s)$ and $\mathbf{W_{in}}(s)$,  the bounds on the chosen desired eigenvector elements,   the maximum number of generations of GA-SCP (20) and GA-RSSD (1000), and  $S_1$ defined   for the damping ratio of 0.3. NN-RSSD  algorithm terminates at the 388th generation of GA-RSSD. Subsequently,   the feasible compensators  are given as

\begin{small}
\begin{align}
\mathbf{W_{in}}(s)=diag\bigg[\frac{1.0s+7.36}{0.007s+10.1},\frac{14.71s+59.78}{0.002s+0.099},\frac{0.992s+5.091}{0.0053s+11.89}\bigg]\label{wine}
\end{align}
\end{small}
\begin{small}
\begin{align}
\begin{split}\label{wone}
\mathbf{W_{ot}}(s)={}&diag\bigg[\frac{0.61s+11}{1.6s+1.6},\frac{0.23s+29.58}{0.55s+0.36},\frac{0.813s+12.1}{7.996s+1.46},\\& \qquad \frac{0.91s+9.78}{0.331s+0.342}, \frac{0.78s+16.81}{1.41s+1.0}\bigg]
\end{split}
\end{align}
\end{small}
\noindent The feasible RSSD  output feedback controller gain is given as
\begin{equation}
\small{\mathbf{K}=\left[\begin{array}{ccccc}{1.13}&{0.78}&{0.26}&{0.14}&{0.13}\\{-1.07}&{  -0.78}&{ 0.74}&{0.009}&{0.54}\\{0.19}&{-0.06}&{-0.002}&{ 1.29}&{0.04}
\end{array}\right]}
\label{gain}
\end{equation}
Besides this, the SCP central plant is $\mathbf{\tilde{P}}_{cp}(s)$=$\mathbf{W_{ot}}(s)\mathbf{P}_3(s)\mathbf{W_{in}}(s)$ with the maximum $v-$gap metric of 0.3184. As this  value is less than 0.4489, one of the objectives of the pre and post compensators is achieved.  The frequency characteristics of  $\mathbf{\tilde{P}}_f(s)$ = $\mathbf{W_{ot}}(s)\mathbf{P}_f(s)\mathbf{W_{in}}(s)$ for all $f \in \{1,2,\dots,8\}$   are shown in Fig. \ref{fresigal1}. 
\begin{figure}[h!]
	\centering
	{\includegraphics[width=2.75in, height=1.75in]{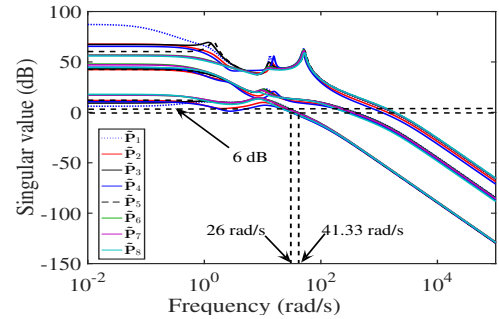}}
	\caption{Singular value plots  of the   plants}. 
	\label{fresigal1}
\end{figure} 
This figure indicates $\underline{\sigma}(\mathbf{\tilde{P}}_f(0))>6$~dB for all $f \in \{1,2,\dots,8\}$.   As illustrated in Fig. \ref{fresigal1}, the 0~dB crossing frequencies of all the performance enhanced  plants are within the desired limit as the range of 0~dB crossing frequencies of $\underline{\sigma}(\mathbf{\tilde{P}}_f(s))$ for all $f \in \{1,2,\dots,8\}$ are within 26-41.33~rad/s.  Hence,  $\mathbf{W_{in}}(s)$ and $\mathbf{W_{ot}}(s)$ given in  (\ref{wine}) and (\ref{wone}) satisfy all the design requirements with respect to the  performance enhanced  plants.
\subsection{Stability analysis}
The controller gain given in  (\ref{gain}) simultaneously stabilizes eight plants of the NAV as all  the eigenvalues of the CL plants, $\mathbf{\hat{H}}_f(s)=\frac{\mathbf{\tilde{P}}_f(s)\mathbf{K}}{\mathbf{I}-\mathbf{\tilde{P}}_f(s)\mathbf{K}}$ for all  $f \in \{1,2,\dots,8\}$ lie in $\mathcal{C}_-$ as shown in Fig. \ref{fig:p1}. 
\begin{figure}[h!]
	\begin{center}
		\includegraphics[width=2.75in, height=1.75in]{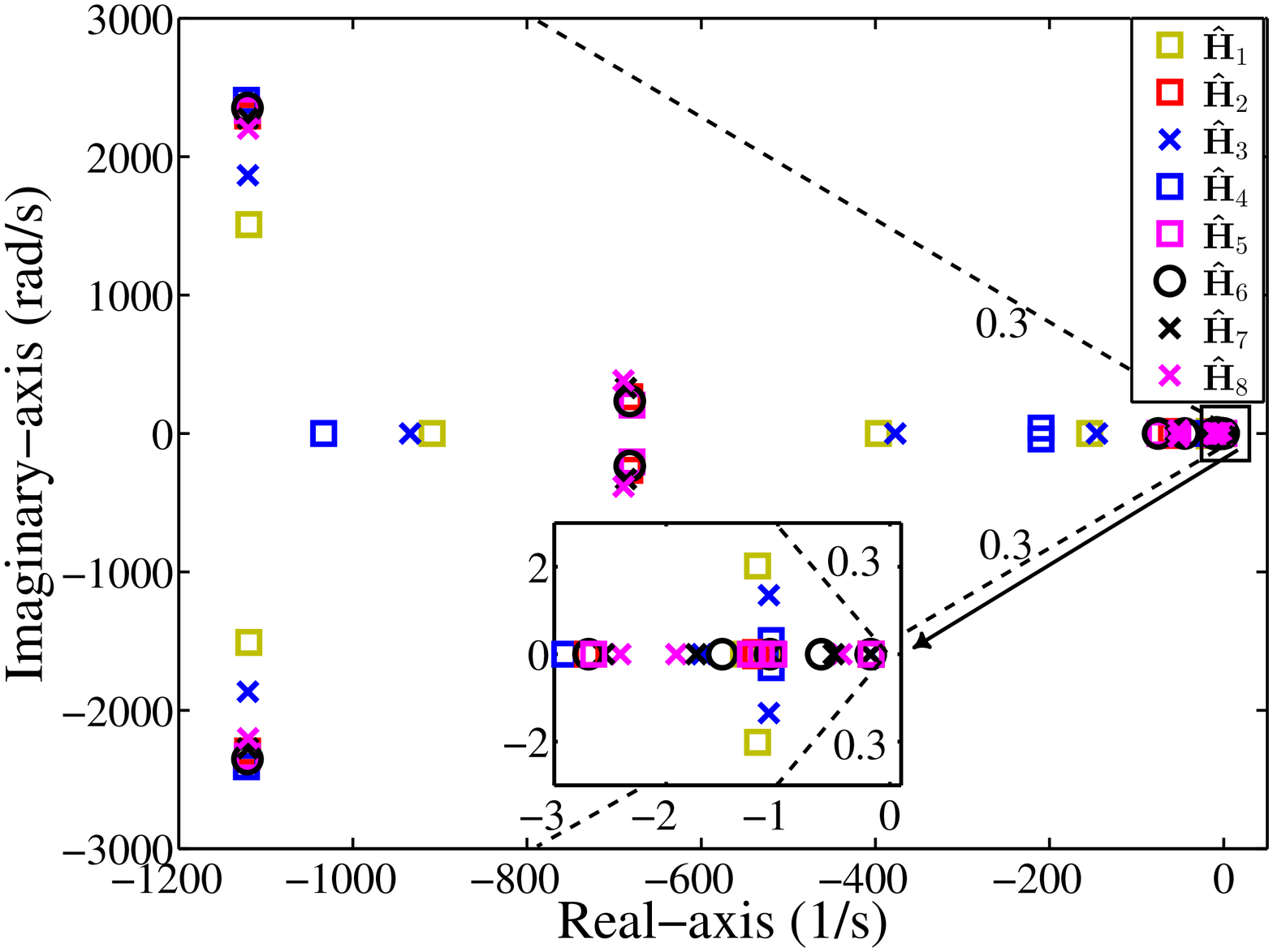}
		\caption{Eigenvalues of the   closed-loop plants of  NAV}
		\label{fig:p1}
	\end{center}
\end{figure}
This figure also indicates the damping ratios of all the CL eigenvalues  are greater than or equal to 0.3. 
Besides this, Table \ref{tab:disk1} suggest that all the eight CL plants have desired  MDGM and MDPM.
 \begin{table}[h!]
\centering
\caption{\label{tab:disk1} Multiloop disk-based gain (dB) and phase margins (deg) of the NAV}
\begin{tabular}{llllll}
\hline
Plant &\multicolumn{1}{c}{MDGM} &\multicolumn{1}{c}{MDPM}&Plant &\multicolumn{1}{c}{MDGM} &\multicolumn{1}{c}{MDPM}\\
\hline
$\mathbf{\hat{H}_1}(s)$&$\pm$4.30&$\pm$27.30&$\mathbf{\hat{H}_5}(s)$&$\pm$3.39&$\pm$21.79\\
$\mathbf{\hat{H}_2}(s)$&$\pm$3.42&$\pm$22.00&$\mathbf{\hat{H}_6}(s)$&$\pm$3.37&$\pm$21.69\\
$\mathbf{\hat{H}_3}(s)$&$\pm$3.88&$\pm$24.87&$\mathbf{\hat{H}_7}(s)$&$\pm$3.44&$\pm$22.10\\
$\mathbf{\hat{H}_4}(s)$&$\pm$3.33&$\pm$21.44&$\mathbf{\hat{H}_8}(s)$&$\pm$3.50&$\pm$22.51\\
\hline
\end{tabular}
\end{table}
The parametric and unmodeled dynamic uncertainties are represented by the  inverse input multiplicative and output multiplicative uncertainties, respectively. The tolerance level of the CL plants against the output multiplicative uncertainty is analyzed using the  plots of $\frac{1}{\overline{\sigma}(\mathbf{\tilde{P}}_{f}(j\omega)\mathbf{K}(\mathbf{I-\tilde{P}}_{f}(j\omega)\mathbf{K})^{-1})}$ for all $f \in \{1,2,\dots,8\}$ depicted in Fig. \ref{mulunall3}.
This figure depicts that the output multiplicative uncertainty tolerance levels increase with the frequency. This characteristic is required to tackle the output multiplicative uncertainty that occurs in the  high-frequency region. Further,  $\mathbf{\hat{H}}_3(s)$  has the worst-case tolerance level of about 51.8~$\%$ at 2.22~rad/s. This tolerance level is  in the low-frequency region (with respect to the bandwidth of the augmented plants), where the unmodeled dynamic   uncertainty does not occur. The tolerance level of the CL plant  against the inverse input multiplicative uncertainty is analyzed by studying the  plots of $\frac{1}{\overline{\sigma}(\mathbf{(I-K}\mathbf{\tilde{P}}_f(s))^{-1})}$ for all  $f \in \{1,2,\dots,8\}$ shown in Fig. \ref{invinmulunall3}. This figure shows $\mathbf{\hat{H}}_4(s)$ has the worst-case inverse input multiplicative uncertainty tolerance level   of 62~$\%$ at  2987~rad/s. This worst-case tolerance level occurs in the high-frequency region, where parametric uncertainty does not exist. The tolerance levels of all the CL plants are greater than 79~$\%$ when the frequency is below 41.33~rad/s (maximum bandwidth of augmented plant). This satisfies the  tolerance level requirement  against the parametric uncertainties.
\begin{figure}[h!]
\begin{center}
\includegraphics[width=2.75in, height=1.75in]{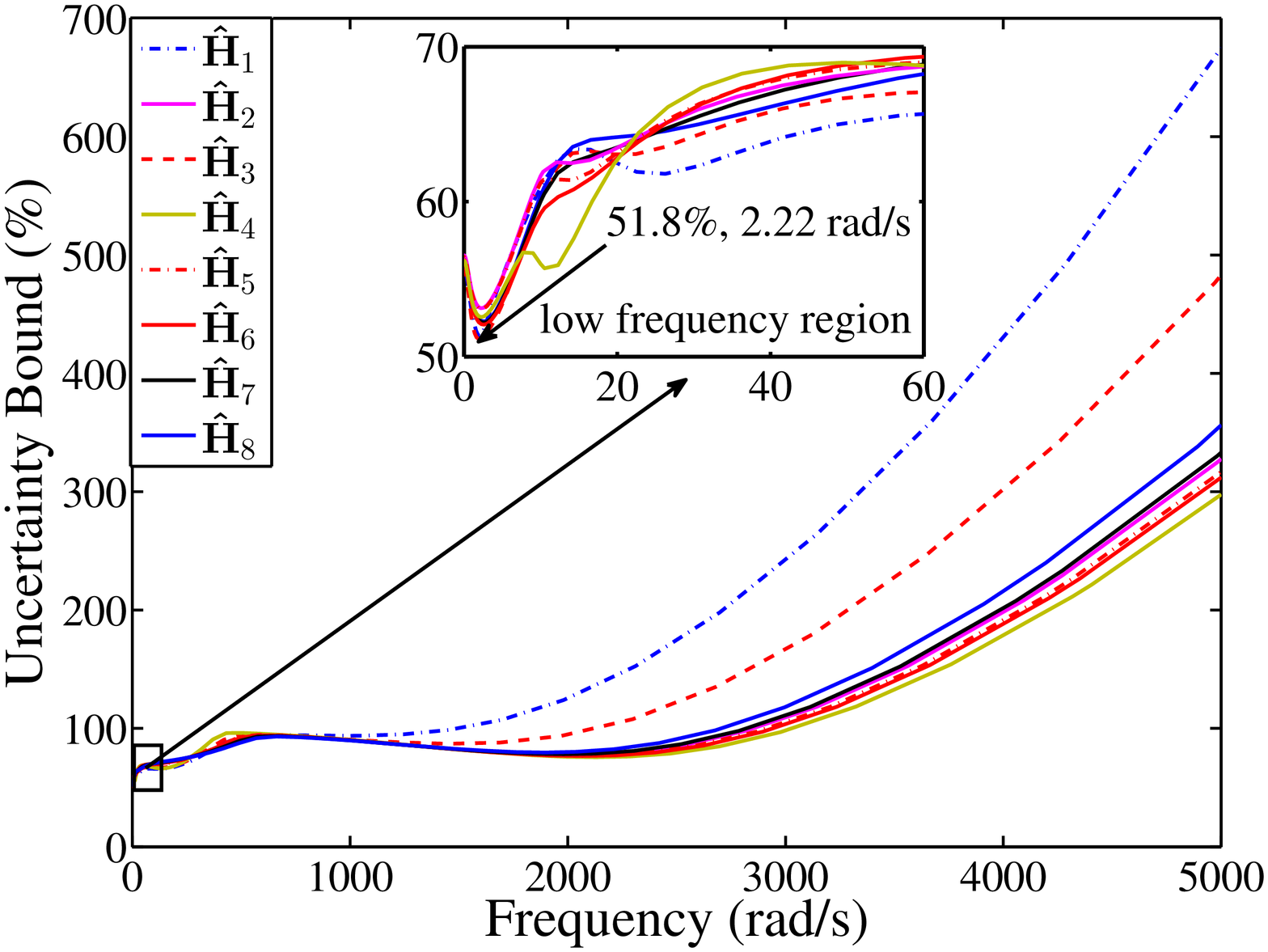}
\caption{Output multiplicative uncertainty bound}
\label{mulunall3}
\end{center}
\end{figure}
\begin{figure}[h!]
\begin{center}
\includegraphics[width=2.75in, height=1.75in]{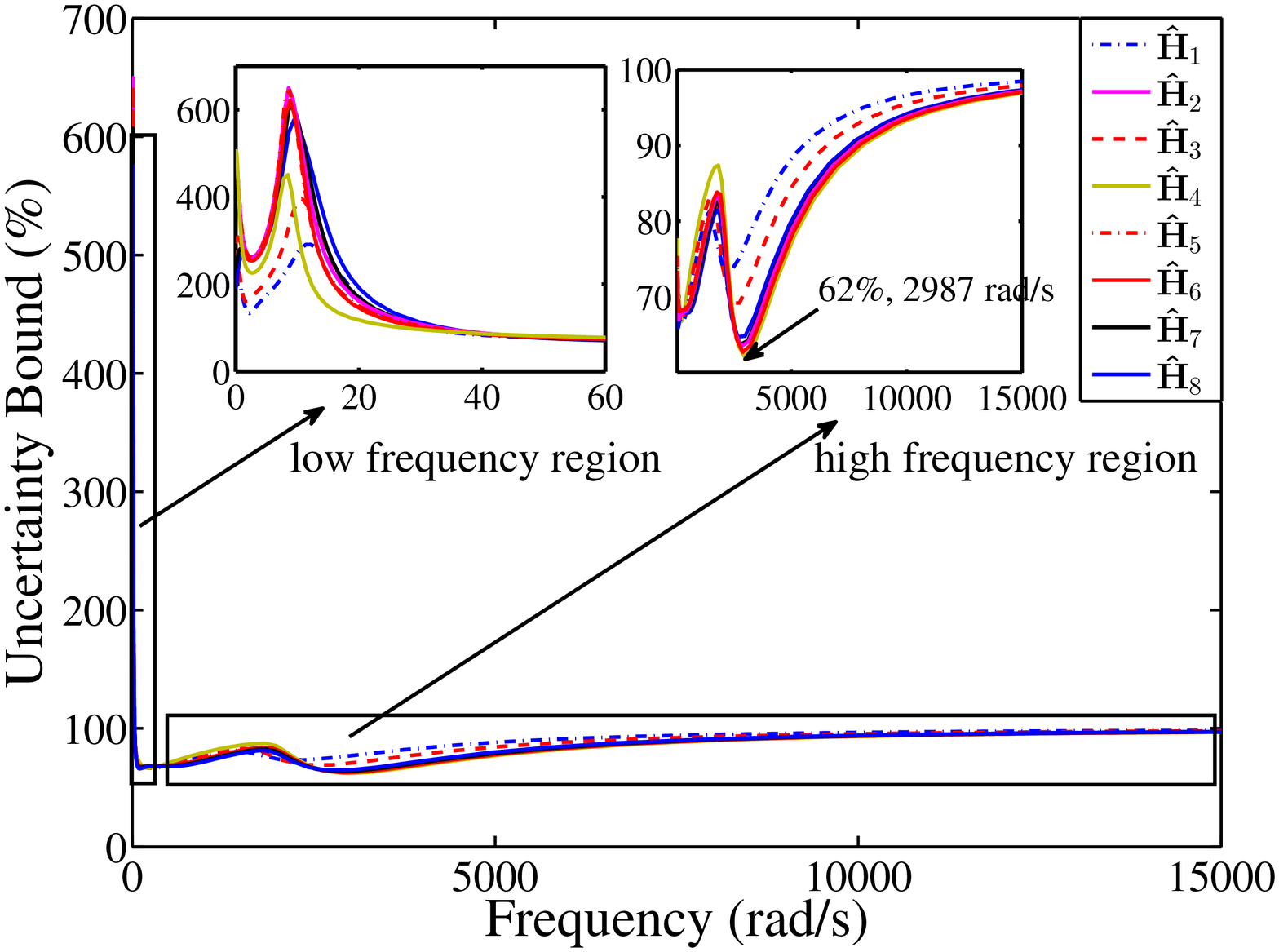}
\caption{Inverse input multiplicative uncertainty bound}
\label{invinmulunall3}
\end{center}
\end{figure}
\subsection{Performance and decoupling analysis}\label{PerAna}
The performance of the controller   is evaluated by analyzing the singular value plots of  output sensitivity function, $\mathbf{S_o}_f(s)=\mathbf{(I-\tilde{P}}_f(s)\mathbf{K)}^{-1}$ shown in Fig. \ref{fig:SoALL1}. 
The controller rejects any disturbance with a frequency less than 197~rad/s acting on any controllable output channels of the NAV as  Fig. \ref{fig:SoALL1} indicates  worst 0~dB crossing frequency (from below) of  $\overline{\sigma}(\mathbf{S_o}_f(j\omega))$ at the controllable output channel is 197~rad/s. The worst magnitudes of $\overline{\sigma}(\mathbf{S_o}_f(0))$ associated with $\bar{r}$, $\bar{\phi}$, and $\bar{\theta}$ output channels are -7.8~dB, -44.6~dB, and -59.8~dB, respectively. Hence, the worst maximum steady-state error for a step input at the controllable output channels are 40.7~$\%$, 0.6~$\%$, and 0.01~$\%$, respectively. The detailed performance analysis including  the input sensitivity,  $\mathbf{S_I}_f(s)$ and  $\mathbf{KS_o}_f(s)$ are given in the supporting material. 
The value of chosen desired eigenvector elements associated with the coupled spiral mode of  CL plant  of $\mathbf{\tilde{P}}_{cp}(s)$ are $\bar{u}$=-0.0065, $\bar{w}$=0.0905, and $\bar{r}$=-0.0002. For the coupled Dutch roll mode, the value of these elements are $\bar{p}$=-0.0159$\pm$0.0423$j$ and $\bar{\phi}$=-0.0010$\pm$0.0039$j$. Likewise, the chosen desired eigenvector element of coupled short period mode is $\bar{r}$=-0.0003$\pm$0.0001$j$. The values of these elements suggest that they are within the bounds given in Table~\ref{tab:dec1}.
 \begin{figure}[h!]
\begin{center}
\includegraphics[width=2.75in, height=1.75in]{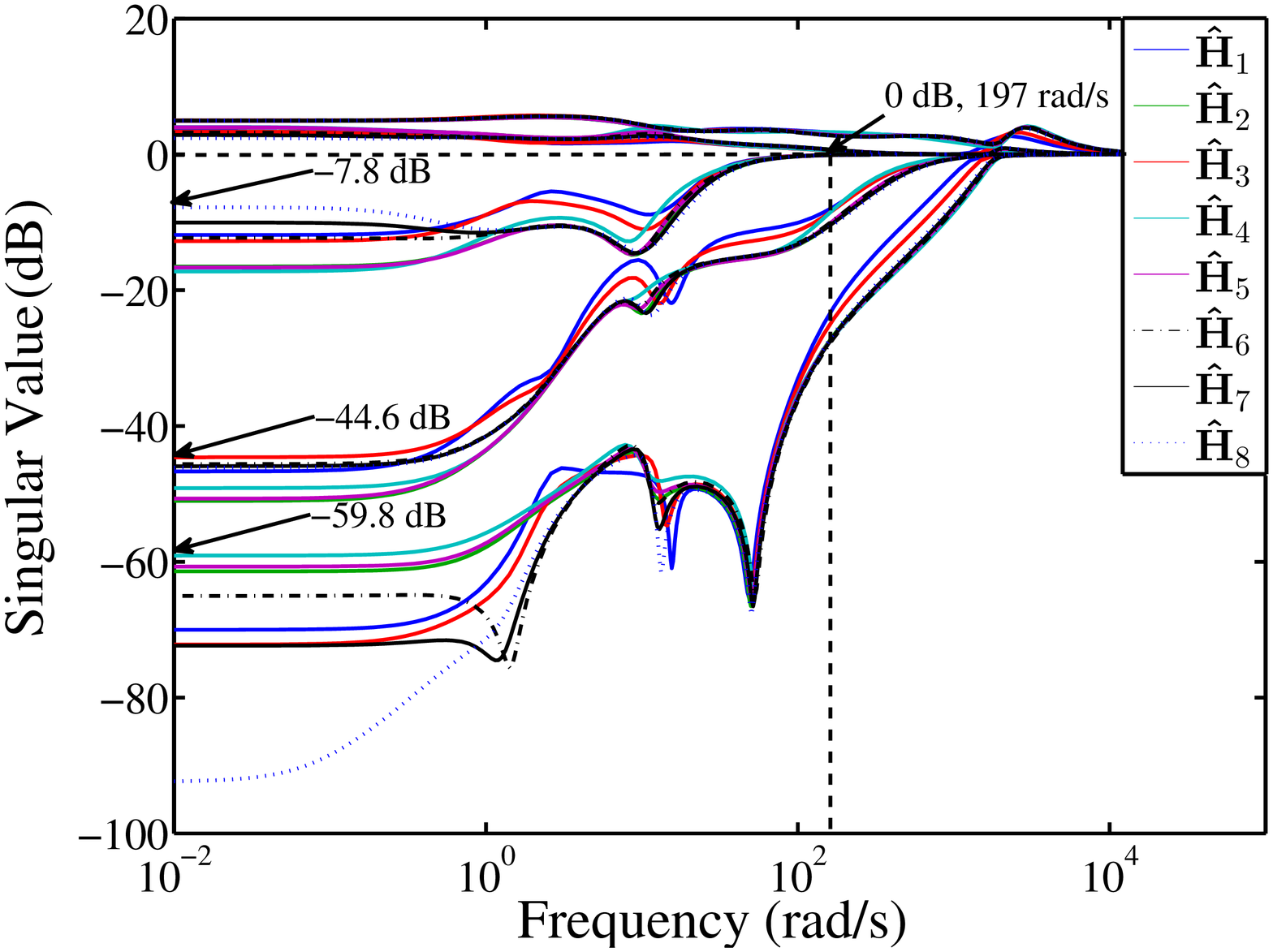}
\caption{Singular value plots of the output sensitivity functions}
\label{fig:SoALL1}
\end{center}
\end{figure}
 \subsection{Six-Degree-of-Freedom Simulations}
To assess the performance of the controller,  five cases of six-degree-of-freedom simulations are accomplished. In first and second cases of simulations,  the CL nonlinear and linear plants of the NAV are  forced to track a doublet $\theta$ and $\phi$ reference signals separately.  The block diagram of  the   system structure for the simulation of CL nonlinear dynamics of the NAV is given in supporting material. The  magnitude and  pulse width of the doublet $\theta$ reference signal are ${\pm0.0873}$~rad (${\pm5}$ deg) and  2~s, respectively.
 Similarly,  the doublet $\phi$ reference signal has a magnitude of ${\pm0.0349}$~rad (${\pm2}$~deg) and a pulse width of 2~s. The output time responses of  the  linear CL plant  of $\mathbf{\tilde{P}}_{cp}(s)$  offsetted with the trim values are shown in  Figs. \ref{fig:theF11}-\ref{fig:rcoF11p} along with  the output time responses of  the corresponding nonlinear  CL plant. The output time responses of  all the   CL plants are given in the supporting material. After analyzing all these plots, the following conclusions are made.
\begin{enumerate}
\item  The CL nonlinear plants  are stable as their  time responses of output variables are not diverging in finite time. The time responses of the CL nonlinear and linear  (with offset) plants are almost identical, which establishes that the linear plant  accurately captures the behavior of the nonlinear plant around the trim point. Furthermore, the CL plants are tracking their reference signals as the tracking variables are  within the specified error band.
  \item While tracking the doublet $\theta$ reference signal, the change in yaw rate responses  is minimal when compared to the roll rate and pitch rate responses. This indicates that the coupled short period mode of all the CL plants are decoupled from yaw rate response which is one of the design requirement.
  \item Compared to the change in yaw rate responses, a substantial change in roll rate responses are visible when all the CL plants track the  doublet $\theta$ command. This large change is mainly due to the variation of counter torque due to the change in $\delta_T$ and thereby thrust. The coupling induced by counter torque affects the coupled roll modes of all the CL plants and thereby the roll rates.  The controller is not designed to decouple the coupled roll modes from roll rate responses, which result in the excitation of the roll rates when the NAV is forced to track a doublet $\theta$ command. However, the converging (converging to the trim value) of roll rate responses establishes that the coupled roll modes of  all the CL plants are stable. The roll angle responses  indicate the variations in the roll angle responses are minimal. This is because the effect of  counter torque on the  coupled spiral mode is minimal as   $u$ and $w$ are decoupled from the coupled spiral mode. Furthermore, the time responses indicate that all the CL plants have similar desired decoupling characteristics and thus, the controller  accomplishes one of the design objectives.
\end{enumerate}
In the third case and the fourth case, the effectiveness of the controller in tolerating the parametric and unmodeled dynamic uncertainties is evaluated. To evaluate the effect of parametric uncertainty, 40~$\%$  of uncertainty is induced in the static, dynamic, and control   derivatives of the NAV  by   multiplying these derivatives with the signal (with the frequency of 25.13~rad/s) shown in Fig. \ref{untysig} while nonlinear CL plant of $\mathbf{\tilde{P}}_{cp}(s)$ tracks a doublet $\theta$ reference signal. 
Likewise, to evaluate the effect of unmodeled dynamic uncertainty,  an unmodeled dynamic uncertainty is modeled by adding the frequency-dependent weight, $G(s)=\frac{3s + 923.9}{s + 9239}$, to the closed-loop (as shown in the Fig. 8 of the supporting material).   This weight induces  50~$\%$ relative uncertainty up to 60 rad/s to the respective  output channel.   The relative uncertainty  rising thereafter induces 100$\%$ relative uncertainty at 3250 rad/s.
Corresponding  simulation results  indicate that the CL nonlinear plant  is stable, as the output responses are not diverging as shown in Figs. \ref{fig:theF11}-\ref{fig:rcoF11p}. Additionally, the CL nonlinear plant  is tracking its reference signal. The effectiveness of the controller in attenuating the disturbance is tested in the fifth case of simulation. 
Moreover,  the hardware-in-loop simulations (HILS) of the CL nonlinear plants of the NAV are also accomplished. The details of both these simulation results are given in supporting material. The video of HILS can be found in \href{https://youtu.be/UCscRGwaNqI}{https://youtu.be/UCscRGwaNqI}.
\begin{figure}[!h]
\centering
\subfigure[Uncertainty signal  \label{untysig}]{\includegraphics[width=1.72in, height=1.22in]{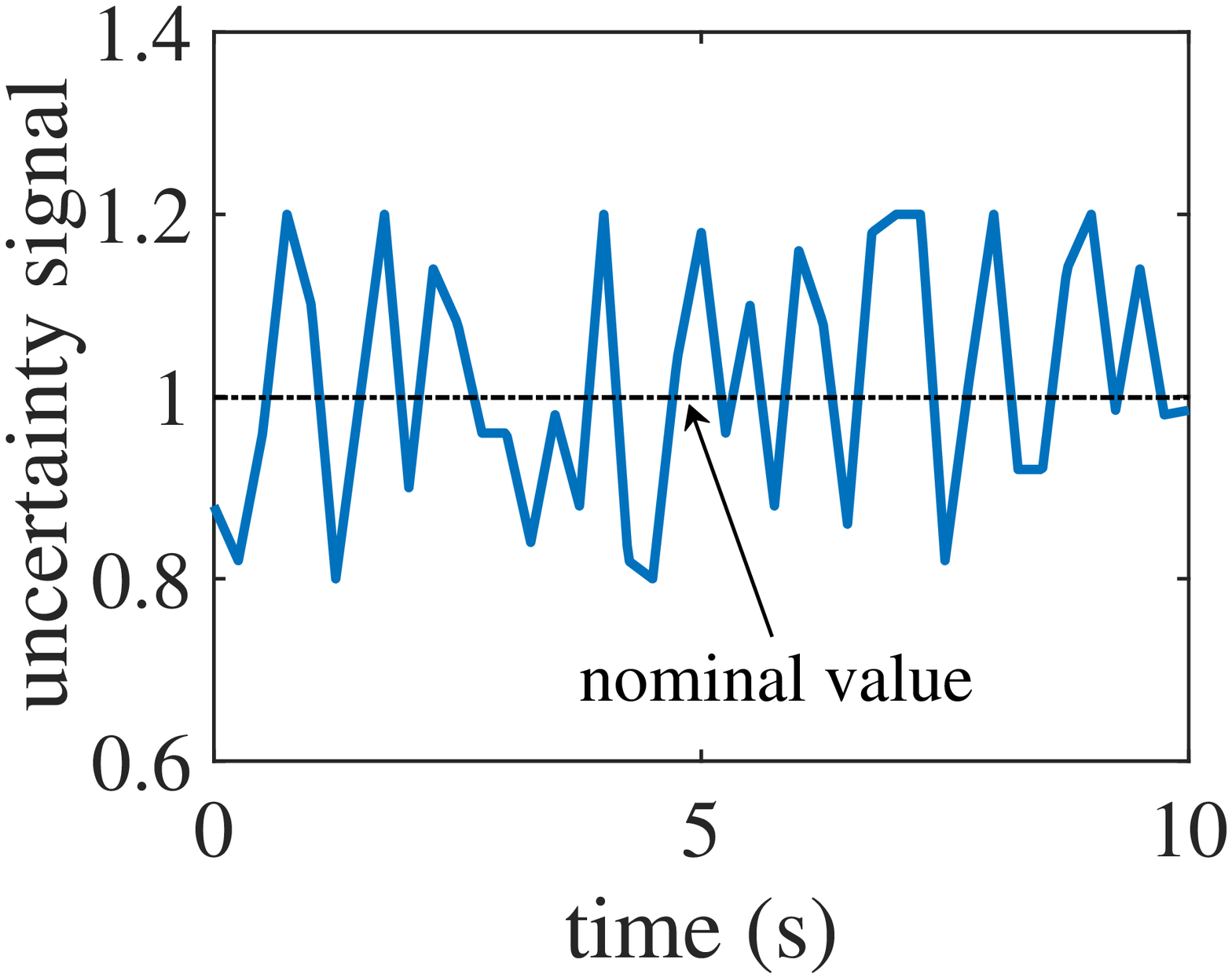}}
\centering
\subfigure[Pitch angle response  \label{fig:theF11}]{\includegraphics[width=1.72in, height=1.22in]{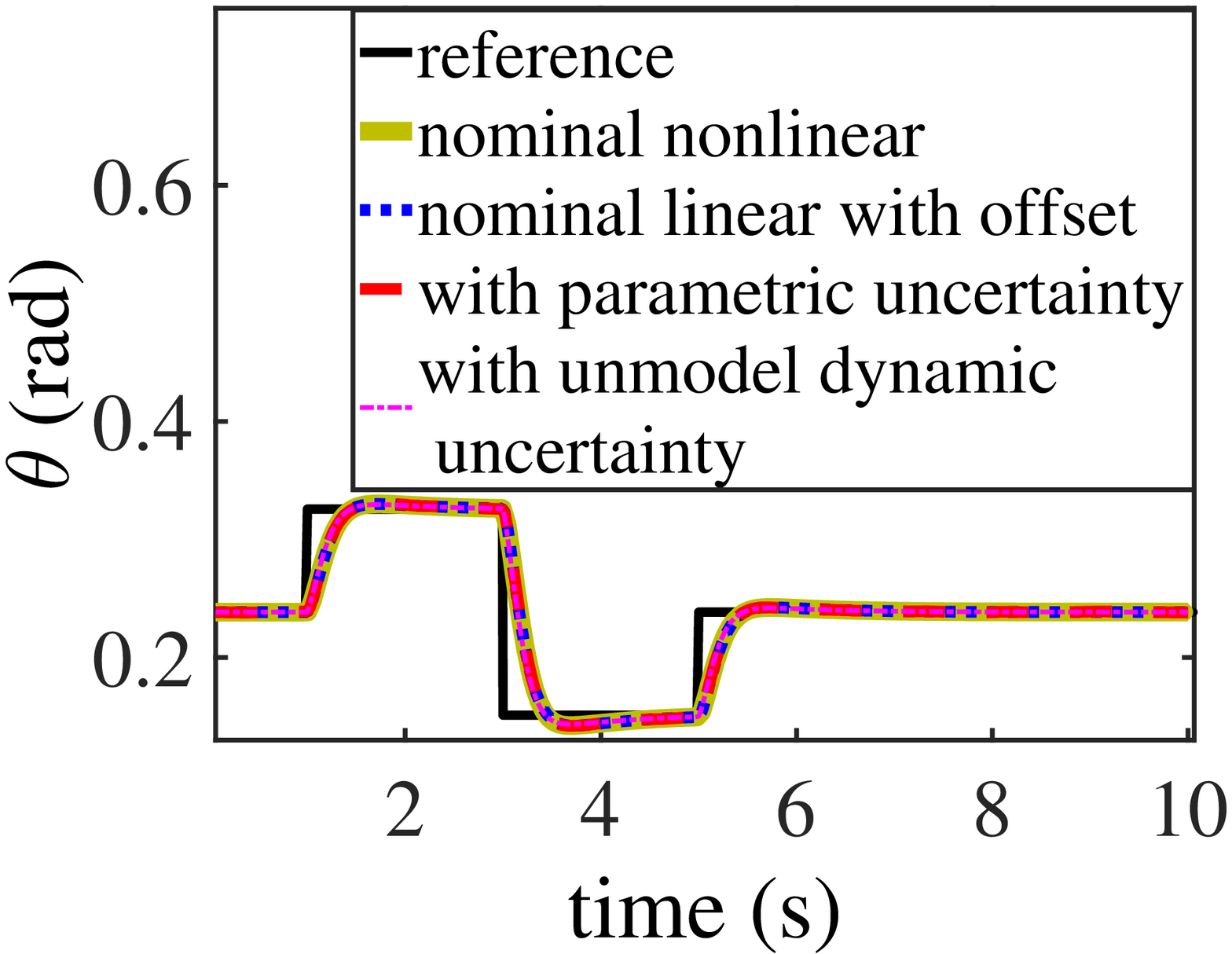}}
\subfigure[Roll angle response  \label{fig:ph11}]{\includegraphics[width=1.72in, height=1.22in]{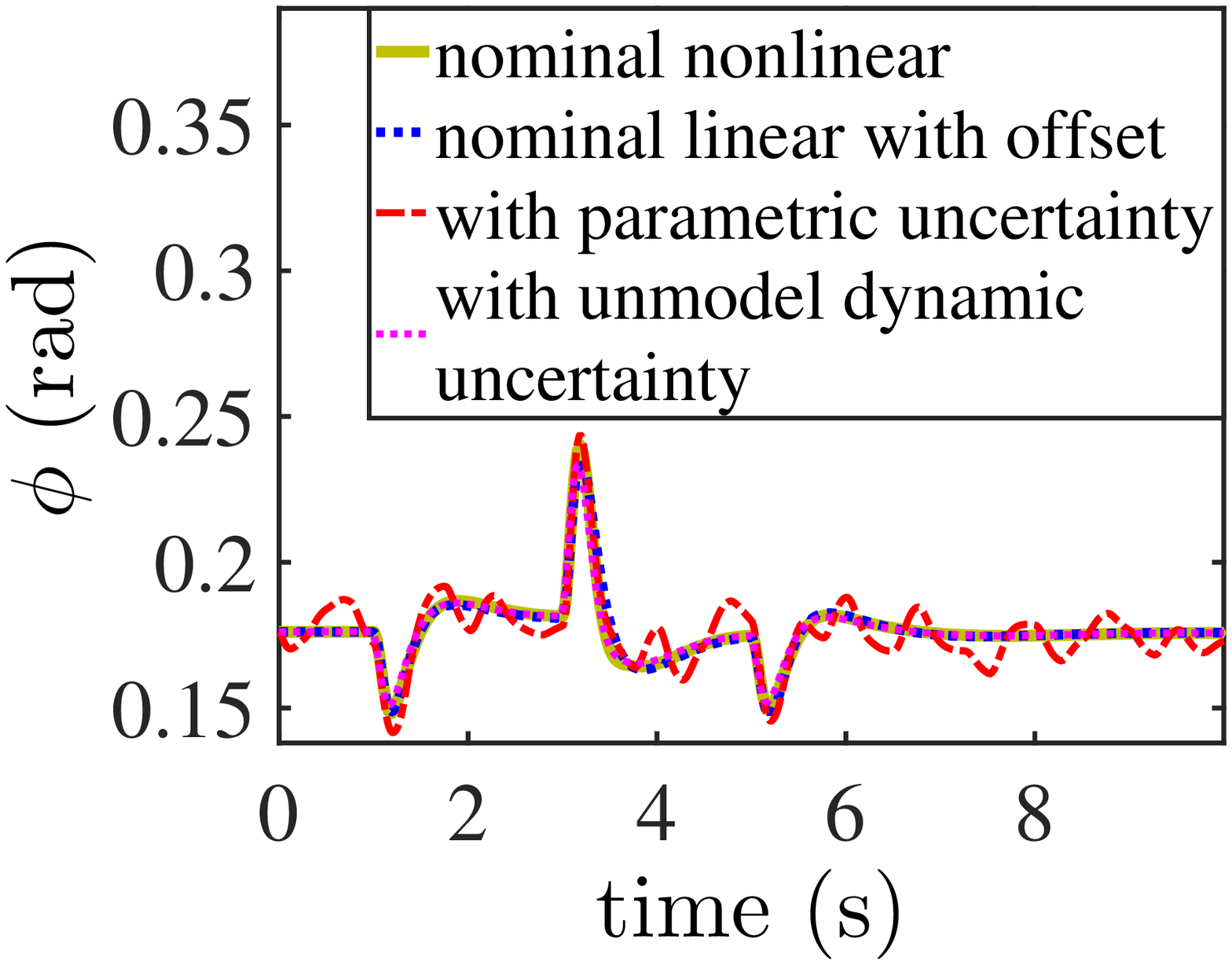}}
\subfigure[Roll rate response  \label{fig:pcoF11}]{\includegraphics[width=1.72in, height=1.22in]{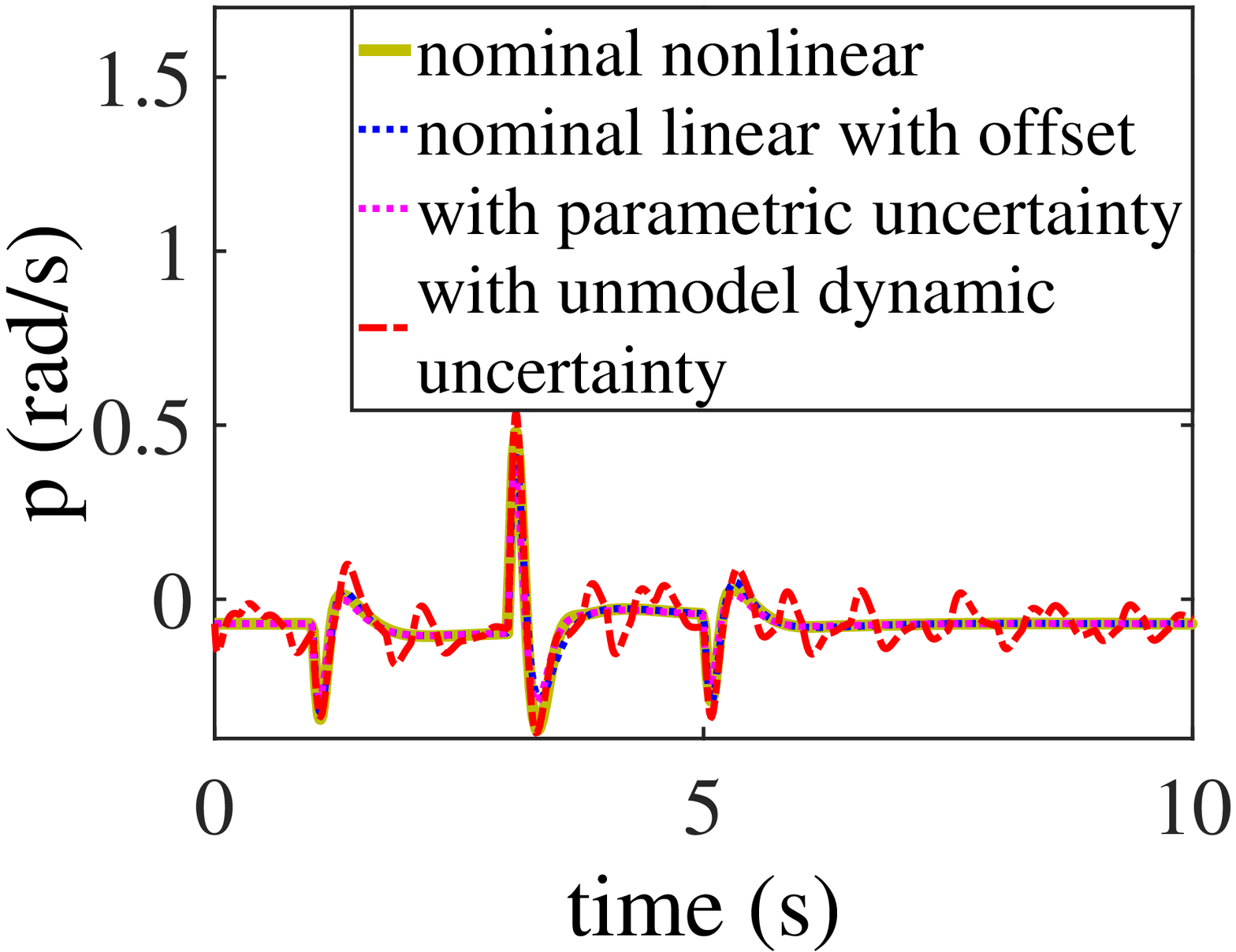}}
\subfigure[Pitch rate response  \label{fig:qcoF11}]{\includegraphics[width=1.72in, height=1.22in]{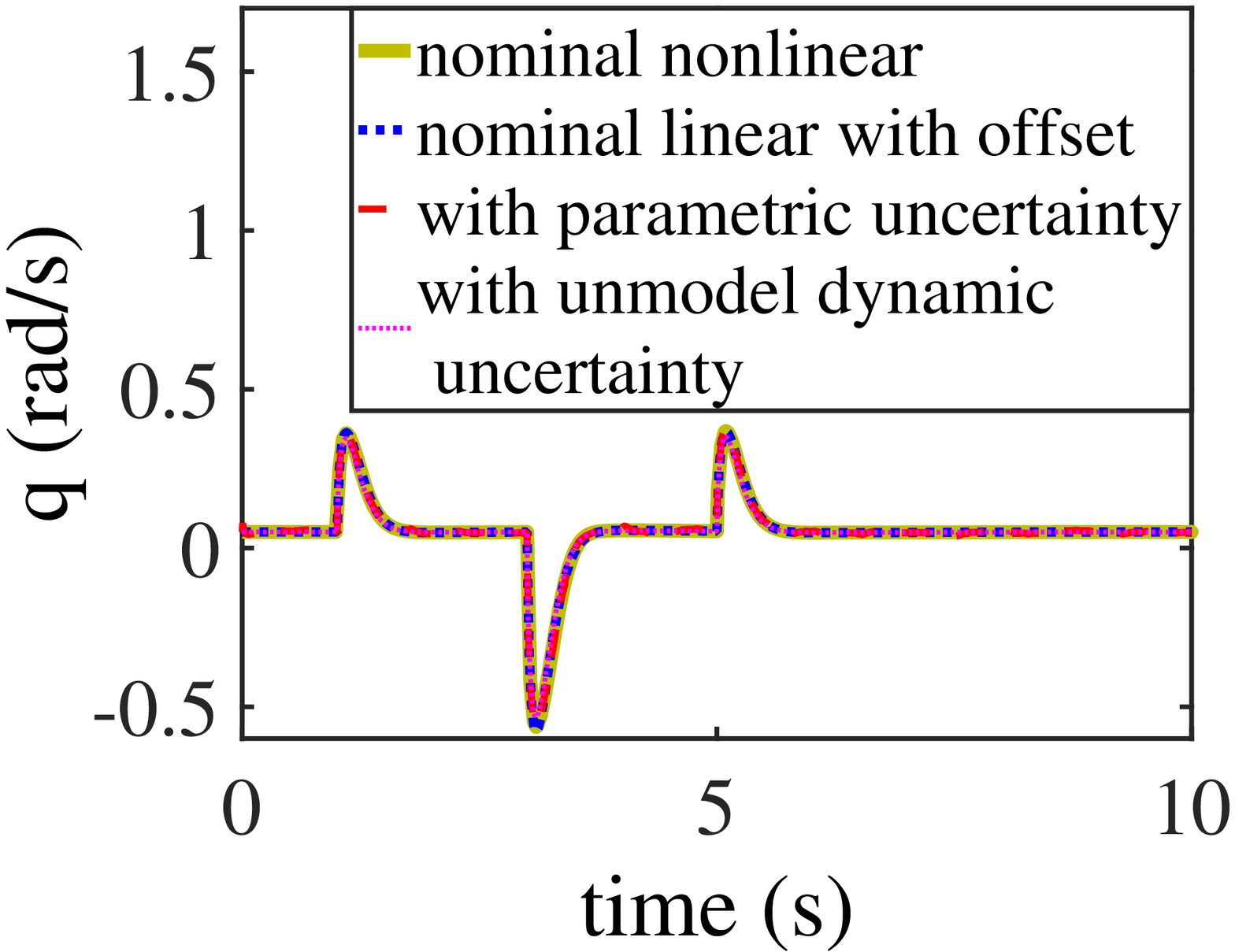}}
\subfigure[Yaw rate response  \label{fig:rcoF11}]
{\includegraphics[width=1.72in, height=1.22in]{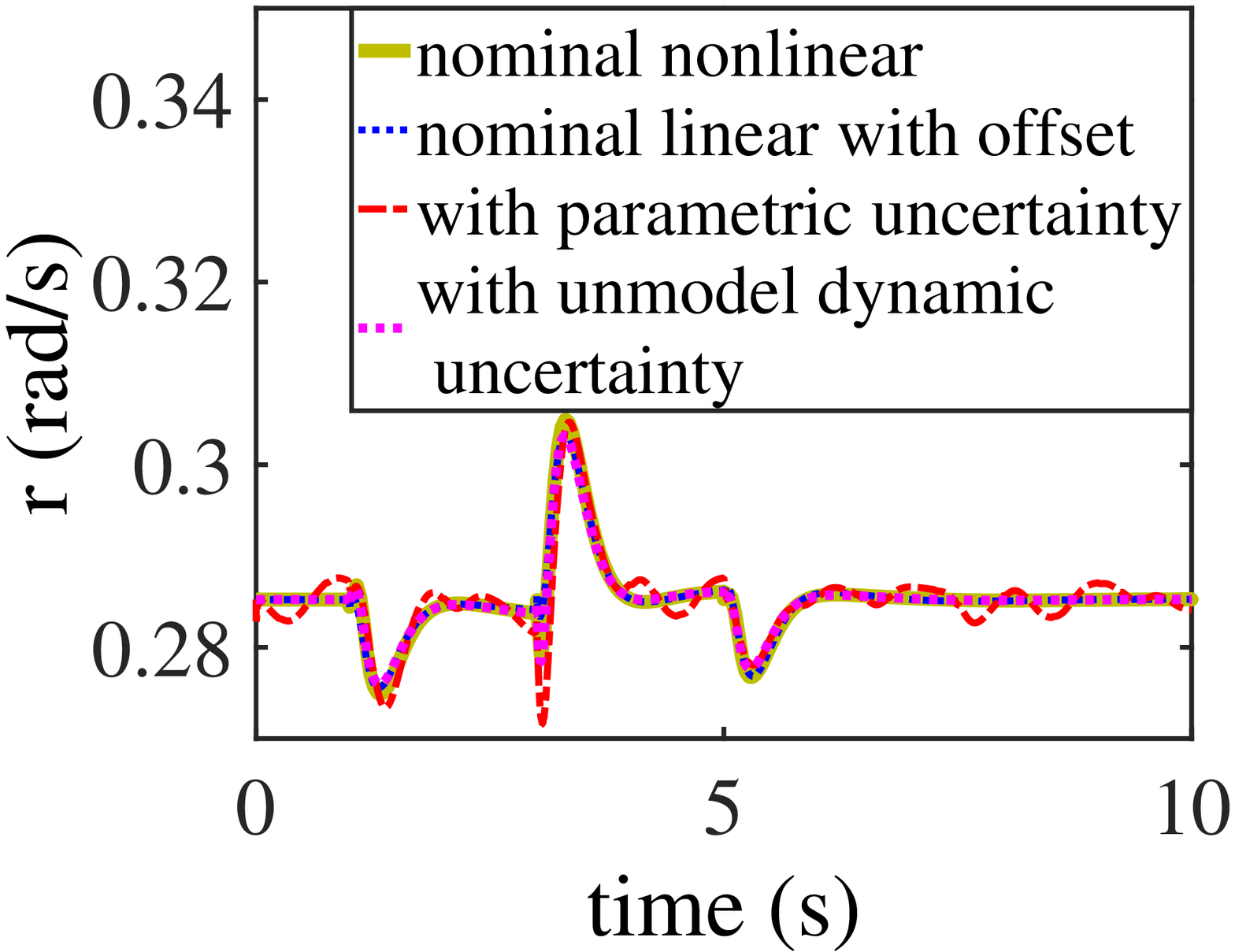}}
 \caption{$\theta$ tracking performance of  CL  plant  of $\mathbf{\tilde{P}}_{cp}(s)$}
\end{figure} 
\begin{figure}[H]
\centering
\subfigure[Pitch angle response  \label{fig:theF11p}]{\includegraphics[width=1.71in, height=1.16in]{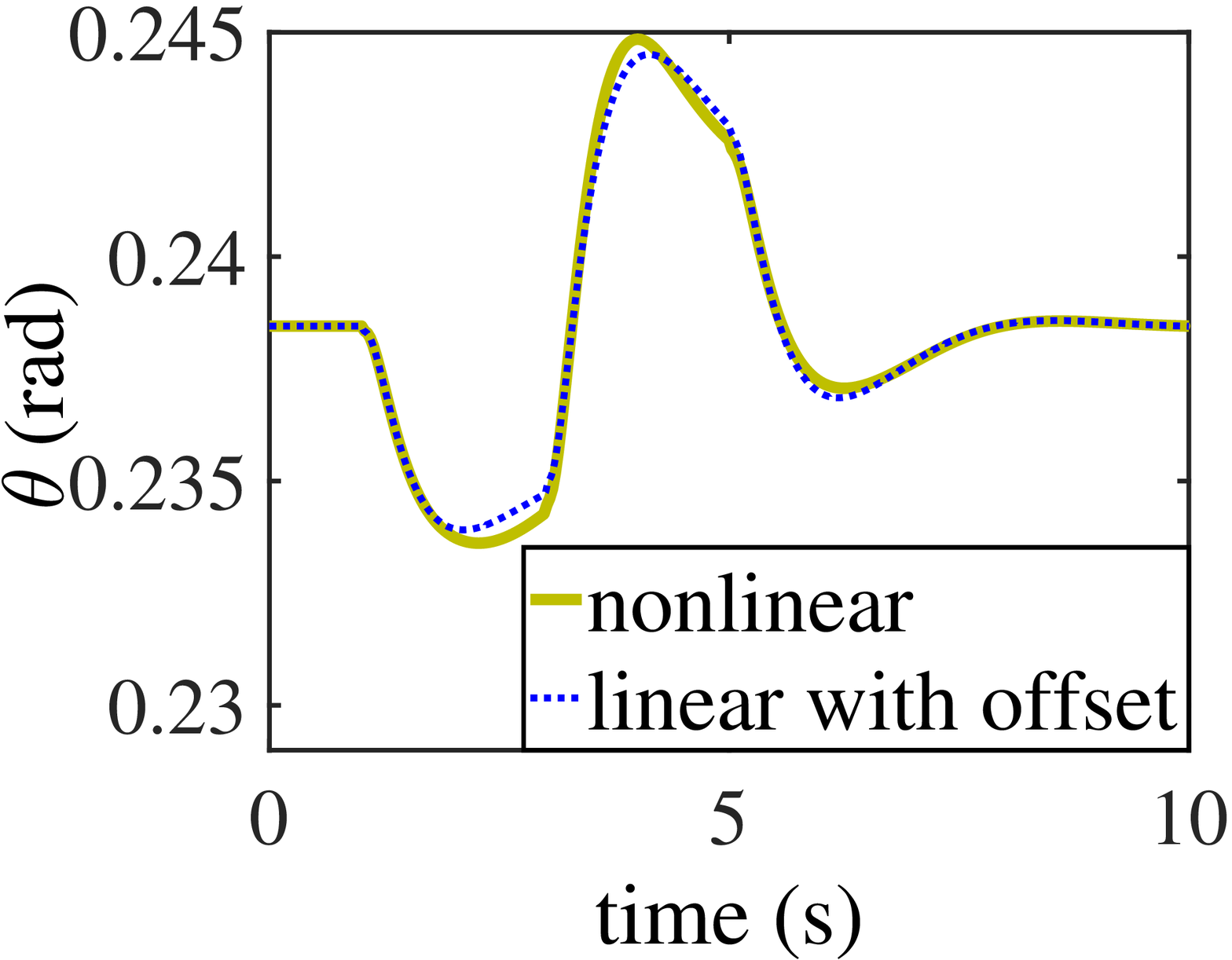}}
\subfigure[Roll angle response  \label{fig:ph11p}]{\includegraphics[width=1.71in, height=1.16in]{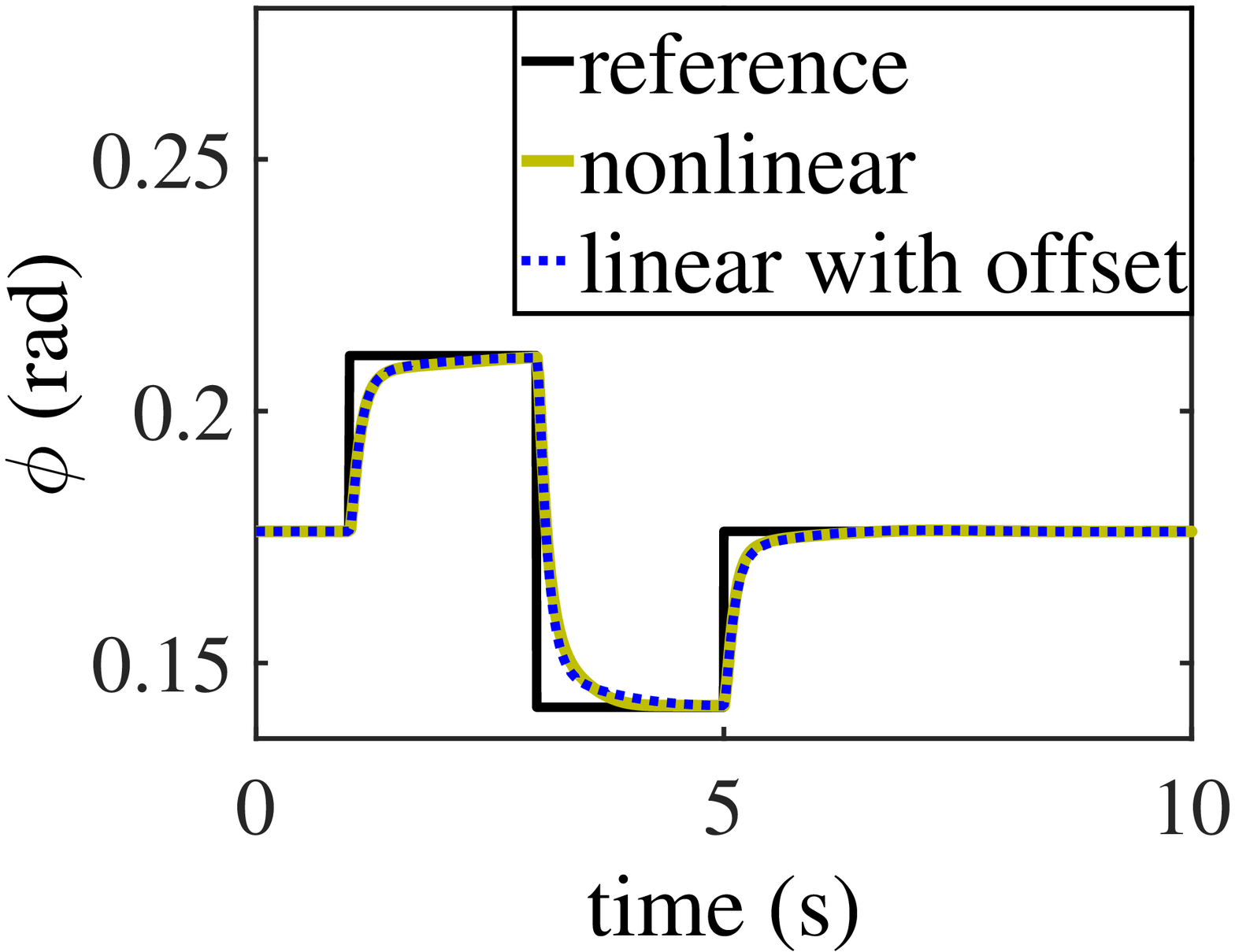}}
\subfigure[Roll rate response  \label{fig:pcoF11p}]{\includegraphics[width=1.72in, height=1.16in]{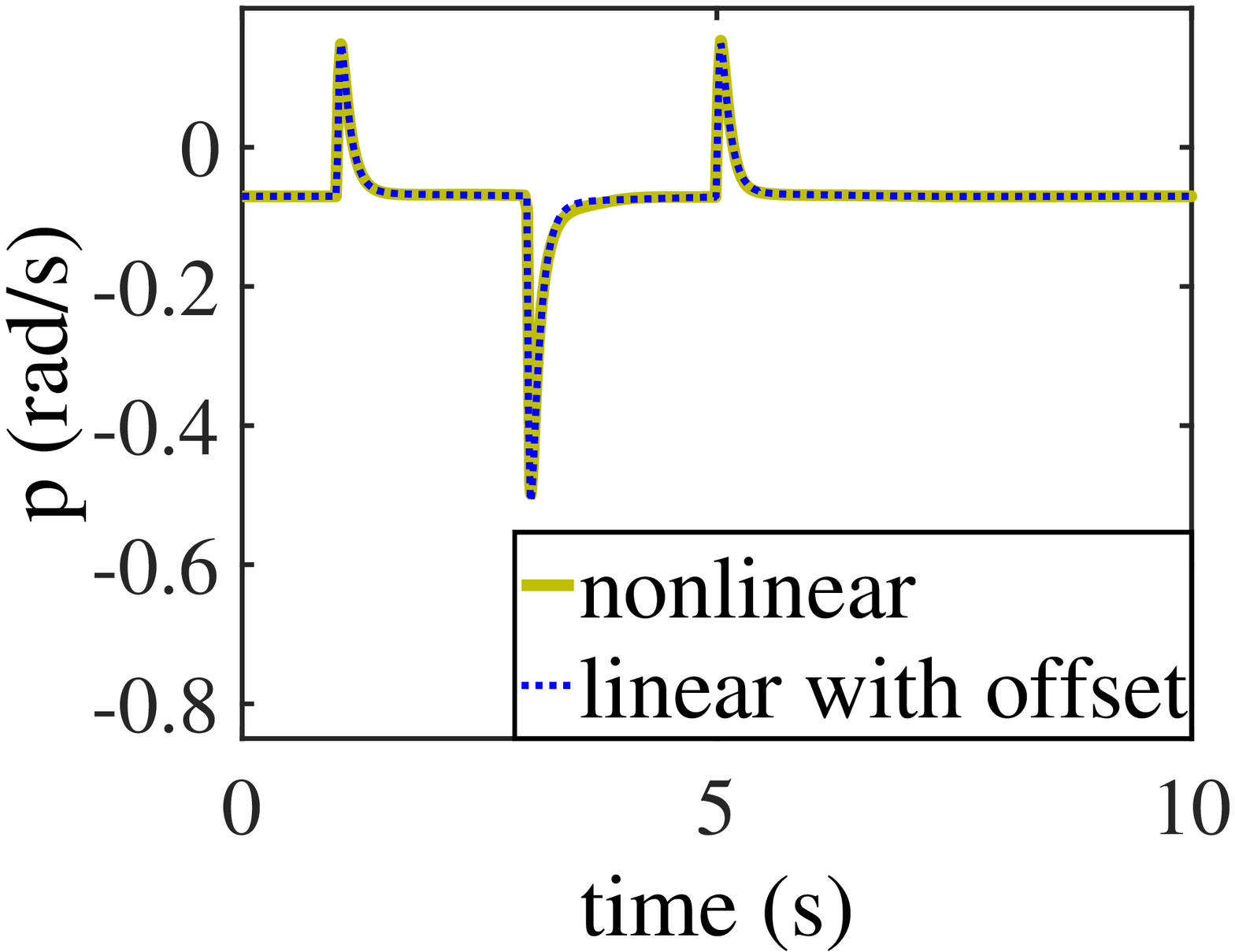}}
\subfigure[Pitch rate response  \label{fig:qcoF11p}]{\includegraphics[width=1.72in, height=1.16in]{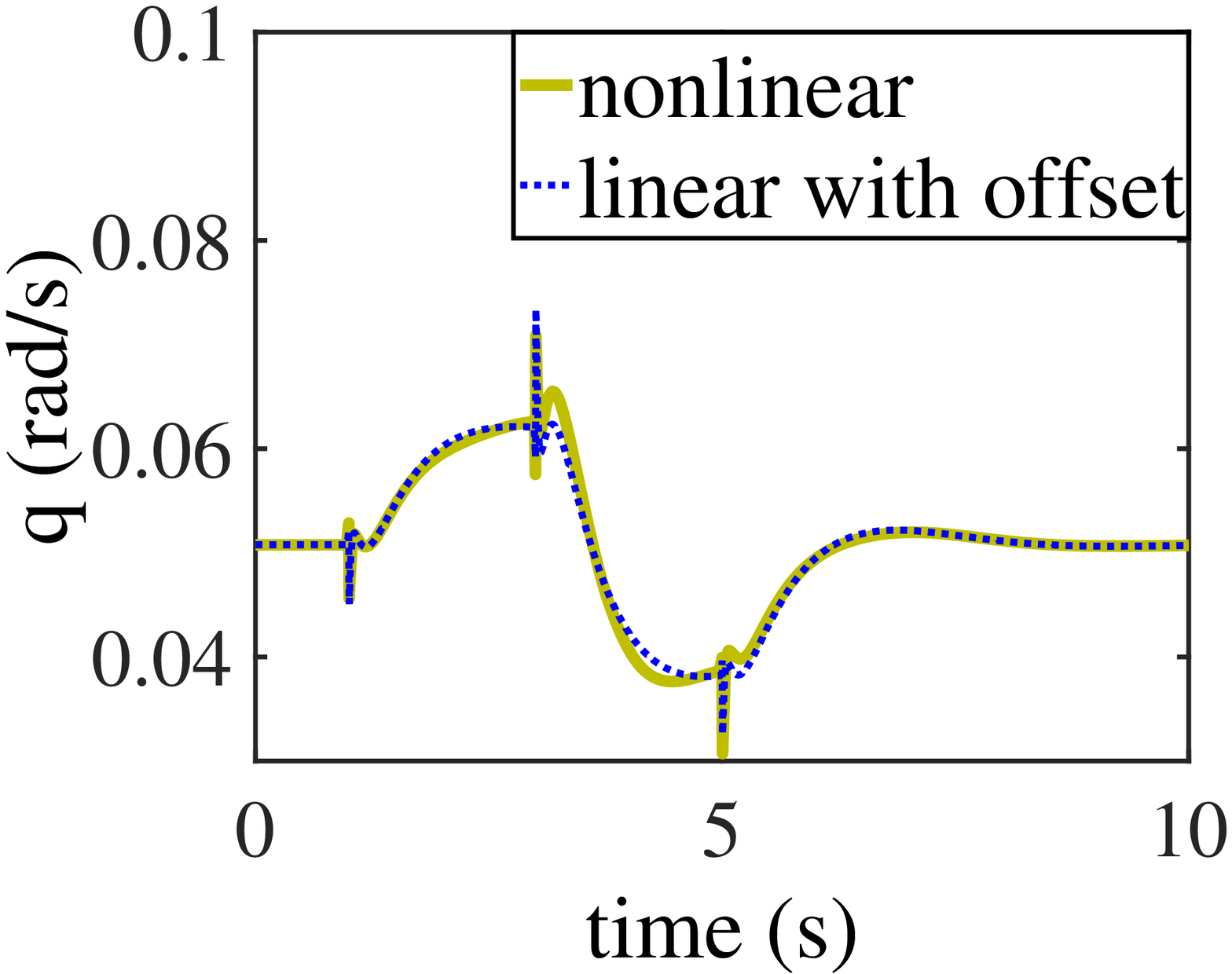}}
\subfigure[Yaw rate response  \label{fig:rcoF11p}]
{\includegraphics[width=1.7in, height=1.16in]{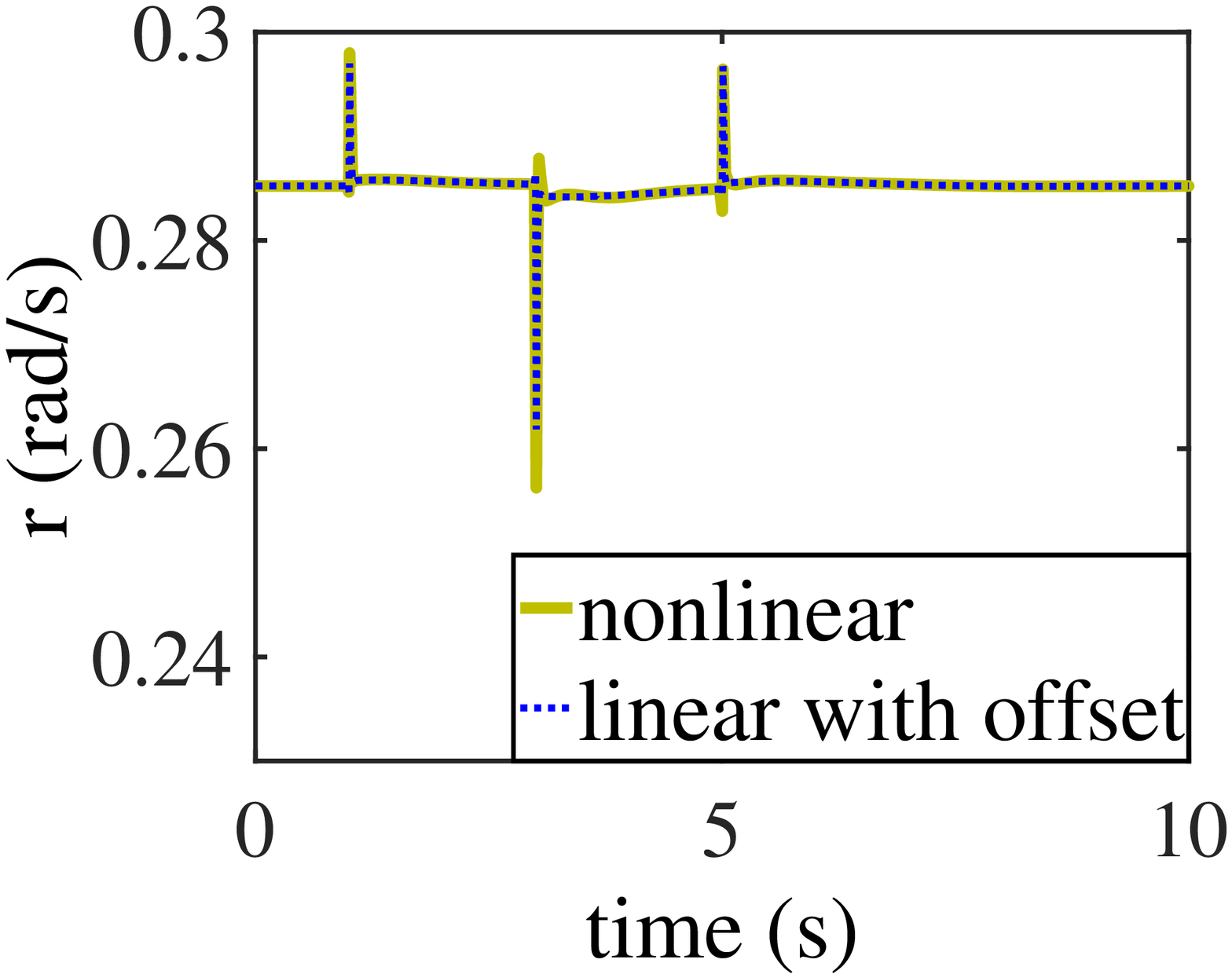}}
\caption{$\phi$ tracking performance  of CL   plant of $\mathbf{\tilde{P}}_{cp}(s)$}
\end{figure}

\section{Conclusions}\label{concu}
A new tractable method based on the central plant is developed to synthesize a RSSD  output feedback controller  for a finite set of unstable MIMO adversely
coupled plants of a NAV having  resource-constrained autopilot hardware. To this end, the method developed to identify the central plant  is tractable and is suitable for stable/unstable plants with the varying/same number of unstable poles. The sufficient conditions for the existence of the RSSD  output feedback controller developed are easily testable for a given set of plants. Further, the tractability of the method developed using these conditions to solve the RSSD problem is successfully demonstrated by generating a feasible RSSD output feedback controller for the eight unstable plants of the fixed-wing NAV. All the CL plants with this controller demonstrate that they satisfy desired design specifications as indicated by the stability, performance,
and decoupling analyzes and the six-degree-of-freedom simulation results. Additionally, the hardware-in-the-loop simulation results show that there are no  implementation issues associated with the controller and the compensators.

\vspace{-1cm}
\begin{IEEEbiographynophoto}{Jinraj V Pushpangathan}
received his Ph.D. degree in aerospace engineering from Indian Institute of Science (IISc), India in 2018. Currently, he is  working as research fellow in Aerospace Department of IISc.   His research interests are robust and optimal control theory, guidance and control of unmanned systems, and flight dynamics and control.
\end{IEEEbiographynophoto}
\vspace{-1.2cm}
\begin{IEEEbiographynophoto}{K. Harikumar}
is currently an Assistant Professor in International Institute of Information Technology, Hyderabad, India. He  received the Ph.D. degree in  aerospace engineering from the Indian Institute of Science (IISc), India in 2015.  His research interests are applications of control theory to unmanned systems and flight dynamics.
\end{IEEEbiographynophoto}
\vspace{-1.2cm}
\begin{IEEEbiographynophoto}{Suresh Sundaram (SM'08)}
  received the Ph.D. degrees in aerospace engineering from the Indian Institute of Science (IISc),  India, in  2005, where he is currently an Associate Professor.
 From 2010-2018, he was an Associate Professor with the School of Computer Science and Engineering, Nanyang Technological University.
  His current research interests include machine learning,  optimization, and computer vision.
\end{IEEEbiographynophoto}
\vspace{-1.2cm}
\begin{IEEEbiographynophoto}{Narasimhan Sundararajan (LF'11)}
	received  the Ph.D. degree in electrical
	engineering from the University of Illinois at
	Urbana-Champaign, Urbana, IL, USA, in 1971.
	From 1971 to 1991, he was with the Vikram
	Sarabhai Space Centre, Indian Space Research Organization, Trivandrum, India. From 1991, he was a Professor (Retd.) with the School of Electrical and Electronic Engineering, Nanyang
	Technological University (NTU), Singapore.  He was a Senior Research Fellow with the School of Computer Engineering, NTU. His current research interests include  spiking neural networks, neuro-fuzzy systems, and optimization with swarm intelligence.
\end{IEEEbiographynophoto}

\end{document}